\title{Formalising Yoneda Ext in Univalent Foundations}
\author{Jarl G. Taxerås {Flaten}}
{University of Western Ontario, London, Ontario, Canada}
{jtaxers@uwo.ca}
{}
{}
\authorrunning{J. G. T. Flaten}
\keywords{homotopy type theory, homological algebra, Yoneda Ext, formalisation, Coq}
\DeclareMathOperator{\Ab}{Ab}
\DeclareMathOperator{\Ext}{Ext}
\DeclareMathOperator{\Zb}{\mathbb{Z}}
\DeclareMathOperator{\Nb}{\mathbb{N}}
\DeclareMathOperator{\Type}{\verb~Type~}
\DeclareMathOperator{\splice}{\circledcirc}
\DeclareMathOperator{\vExt}{{\verb~Ext~}}
\DeclareMathOperator{\vES}{{\verb~ES~}}
\DeclareMathOperator{\vAbSES}{{\verb~AbSES~}}
\DeclareMathOperator{\vHom}{{\verb~Hom~}}
\DeclareMathOperator{\pType}{{\verb~pType~}}
\DeclareMathOperator{\id}{id}
\DeclareMathOperator{\pt}{\verb~pt~}
\DeclareMathOperator{\op}{op}
\newcommand\fib[1]{{\verb~fib~}_{#1}}
\newcommand\mtt{\mathtt}
\newcommand\jeq{\equiv}
\newcommand\tr[1]{\lvert #1 \lvert}
\newcommand\Tr[1]{\lVert #1 \lVert } 
\newcommand\xra{\xrightarrow}
\newcommand\coderef{\diamondsuit}
\newcommand{\period}{\rlap{\hspace{2pt}.}}
\begin{document}

\maketitle

\begin{abstract}
  Ext groups are fundamental objects from homological algebra which underlie important computations in homotopy theory.
  We formalise the theory of Yoneda Ext groups~\cite{Yoneda1954} in homotopy type theory (HoTT) using the Coq-HoTT library~\cite{coqhott}.
  This is an approach to Ext which does not require projective or injective resolutions, though it produces large abelian groups.
  Using univalence, we show how these Ext groups can be naturally represented in HoTT.
  We give a novel proof and formalisation of the usual six-term exact sequence via a fibre sequence of \(1\)-types (or groupoids), along with an application.
  In addition, we discuss our formalisation of the contravariant long exact sequence of Ext, an important computational tool.
  Along the way we implement and explain the Baer sum of extensions and how Ext is a bifunctor.
\end{abstract}

\section{Introduction}

The field of homotopy type theory (HoTT) lies at the intersection of type theory and algebraic topology, and serves as a bridge to transfer tools and insights from one domain to the other.
In one direction, the formalism of type theory has proven to be a powerful language for reasoning about some of the highly coherent structures occurring in branches of modern algebraic topology.
Several of these structures are ``natively supported'' by HoTT, and we can reason about them much more directly than in classical set-based approaches.
This makes HoTT an ideal language in which to formalise results and structures from algebraic topology.
Moreover, theorems in HoTT are valid in any \emph{\(\infty\)-topos}, not just for ordinary spaces.
Details about the interpretation of our constructions into an \(\infty\)-topos, and the relation of our Ext groups to \emph{sheaf Ext}, are discussed in~\cite{CF}.

We present a formalisation of Ext groups in HoTT following the approach of Yoneda~\cite{Yoneda1954, Yoneda1960}.
Ext groups are fundamental objects in homological algebra, and they permeate computations in homotopy theory.
For example, the universal coefficient theorem relates Ext groups and cohomology, and features in the classical proof that \(\pi_5(S^3) \simeq \Zb / 2\).
Much of our formalisation has already been accepted into the Coq-HoTT library under the \href{https://github.com/HoTT/Coq-HoTT/tree/56629c19010a7d7155b22aad4525f9b2ac1bd584/theories/Algebra/AbSES}{\verb~Algebra.AbSES~} namespace, though we have also contributed to other parts of the library throughout this project.
The long exact sequence, along with a few other results we need, are currently in a separate repository named \href{https://github.com/jarlg/Yoneda-Ext}{\verb~Yoneda-Ext~}.
We supply links to formalised statements using a trailing \(^\coderef\)-sign throughout.

In ordinary mathematics, Ext groups of modules over a ring are usually defined using projective (or injective) resolutions.
This is possible because the axiom of choice implies the existence of such projective resolutions, and Ext groups are independent of any particular choice of resolution.
(Similarly, categories of sheaves of modules always admit injective resolutions.)
In our setting, however, even abelian groups fail to admit projective resolutions.
This stems from the fact that some sets fail to be projective, which may be familiar to those working constructively or internally to a topos.
Accordingly, to define Ext groups in homotopy type theory we cannot rely on resolutions.
Fortunately, Yoneda \cite{Yoneda1954, Yoneda1960} gave such a general approach, whose theory is detailed in \cite{Mac63}, our main reference.
A drawback of this approach is that it produces \emph{large} abelian groups, as we explain in \cref{ssec:ses}.

We build upon the Coq-HoTT library~\cite{coqhott}, which contains sophisticated homotopy-theoretic results, but which is presently lacking in terms of ``basic'' algebra.
For this reason, we have opted to simply develop Ext groups of abelian groups, instead of for modules over a ring or in a more general setup.
Nevertheless, it is clear that everything we do could have been done over an arbitrary ring, given a well-developed library of module theory.
Moreover, we emphasise that higher Ext groups in HoTT are interesting even for abelian groups.
While in classical mathematics such Ext groups of abelian groups are trivial in dimension \(2\) and up, in HoTT they may be nontrivial in all dimensions!
This is because there are models of HoTT in which these Ext groups are nontrivial~\cite{CF}.

In \cref{sec:yext} we explain how univalence lets us naturally represent Yoneda's approach to Ext in HoTT.
We construct the type \(\vAbSES(B, A)\) of short exact sequences between two abelian groups \(A\) and \(B\), and define \(\vExt^1(B, A)\) to be the set of path-components of \(\vAbSES(B, A)\).
This definition is justified by characterising the paths in \(\verb+AbSES+(B, A)\), which crucially uses univalence.
We also show that the loop space of \(\verb+AbSES+(B, A)\) is isomorphic to the group \(\verb+Hom+(B,A)\) of group homomorphisms, and that \(\verb+Ext+^1(P,A)\) vanishes whenever \(P\) is projective, in a sense we define.
These results all play a role in the subsequent sections.

The main content of \cref{sec:fibre-sequence} is a proof and formalisation of the following:

\theoremstyle{plain}
\newtheorem*{thm:pullback-fibre-sequence}{\cref{thm:pullback-fibre-sequence}}
\begin{thm:pullback-fibre-sequence}
  Let \(A \xra{i} E \xra{p} B\) be a short exact sequence of abelian groups.
  For any abelian group \(G\), pullback yields a fibre sequence:
  \( \vAbSES(B, G) \xra{p^*} \vAbSES(E, G) \xra{i^*} \vAbSES(A, G). \)\href{https://github.com/HoTT/Coq-HoTT/blob/56629c19010a7d7155b22aad4525f9b2ac1bd584/theories/Algebra/AbSES/PullbackFiberSequence.v\#L390}{\(^\coderef\)}
\end{thm:pullback-fibre-sequence}

We give a novel, direct proof of this result which requires managing considerable amounts of coherence.
The formalisation is done for abelian groups, but the proof applies to modules over a general ring.
Its formalisation benefited from the \verb+WildCat+ library of Coq-HoTT (see \cref{p:wildcat}), which makes it convenient to work with types equipped with an imposed notion of paths.
This allows us to work with \emph{path data} in \(\vAbSES(B, A)\) with better computational properties than actual paths, but which correspond to paths via the aforementioned characterisation.
From the fibre sequence of the theorem we deduce the usual six-term exact sequence (\cref{prop:six-term}), which we then use to compute Ext groups of cyclic groups:
\[ \verb+Ext+^1(\Zb/n, A) \ \cong \ A/n \]
for any nonzero \( n : \Nb \) and abelian group \(A\) (\cref{cor:ext-cyclic}).%
\href{https://github.com/HoTT/Coq-HoTT/blob/832aef3e6fff0f5b953ed170522e1a3d6288a4bb/theories/Algebra/AbSES/SixTerm.v\#L222}{\(^\coderef\)}
The six-term exact sequence, along with this corollary, have already been applied in~\cite{BCFR}.
We also discuss how \(\vExt^1\) becomes a bifunctor into abelian groups using the Baer sum.

Finally, in \cref{sec:les} we define \(\verb+Ext+^n\) for any \(n : \Nb\) and discuss our formalisation of the long exact sequence, in which the connecting maps are given by \emph{splicing}:%
\href{httphttps://github.com/jarlg/Yoneda-Ext/blob/0d8bfe8e168bbdf325e805d7268e826e889189f0/LES.v\#L10}{\(^\coderef\)}

\newtheorem*{thm:les}{\cref{thm:les}}
\begin{thm:les}
  Let \(A \xra{i} E \xra{p} B\) be a short exact sequence of abelian groups.
  For any abelian group \(G\), there is a long exact sequence by pulling back:%
  \(\href{https://github.com/jarlg/Yoneda-Ext/blob/0d8bfe8e168bbdf325e805d7268e826e889189f0/LES.v\#L15}{^\coderef}
  \href{https://github.com/jarlg/Yoneda-Ext/blob/0d8bfe8e168bbdf325e805d7268e826e889189f0/LES.v\#L47}{^\coderef}
  \href{https://github.com/jarlg/Yoneda-Ext/blob/0d8bfe8e168bbdf325e805d7268e826e889189f0/LES.v\#L94}{^\coderef}\)
  \[ \cdots \xra{i^*} \verb+Ext+^n(A,G) \xra{- \splice E} \verb+Ext+^{n+1}(B,G) \xra{p^*} \verb+Ext+^{n+1}(E,G) \xra{i^*} \cdots \period \]
\end{thm:les}

At present, we have only formalised this long exact sequence of \emph{pointed sets}.
It remains to construct the Baer sum making \(\vExt^n\) into an abelian group for \( n > 1\), however once this is done then we automatically get a long exact sequence of abelian groups.
Our proof follows that of Theorem~5.1 in \cite{Mac63}, which is originally due to Stephen Schanuel.

\subparagraph*{Notation and conventions}
We use typewriter font for concepts which are defined in the code, such as \verb+AbSES+ and \verb+Ext+.
In contrast, when we use normal mathematical font, such as \(\Ext^n(B,A)\), we mean the classical notion.
For mathematical statements we prefer to stay close to mathematical notation by
writing for example \(\verb+Ext+^n(B,A)\) for what means \(\verb+Ext n B A+\) in Coq.
The symbol \(^\coderef\) is used to refer to relevant parts of the code.

Our terminology mirrors that of~\cite{hottbook}; in particular we say `path types' for what are also called `identity types' or `equality types'.
We write \(\pType\) for the universe of pointed types, and \(\pt\) for the base point of a pointed type.
The \(\jeq\)-symbol is for definitional equality.

\section{Preliminaries}

\subsection{Homotopy Type Theory}
We briefly explain the formal setup of homotopy type theory along with some basic notions that we need.
For a thorough introduction to HoTT, the reader may consult~\cite{hottbook, rijke-intro}.

Homotopy type theory (HoTT) extends Martin--L\"of type theory (MLTT) with the \emph{univalence axiom} and often various higher inductive types (HITs).
Of the latter, we simply need propositional truncation and set truncation, which we explain in more detail below.

The univalence axiom characterises the identity types of universes.
In ordinary MLTT, there is always a function
\[ \verb+idtoequiv+ : \prod_{X, Y : \Type} (X = Y) \to (X \simeq Y) \]
defined by sending the reflexivity path on a type \(X\) to the identity self-equivalence on \(X\), using the induction principle of path types.
The univalence axiom asserts that \verb+idtoequiv+ is an equivalence for all \(X\) and \(Y\).
In HoTT, the first thing we often do after defining a new type is to characterise its path types.
The univalence axiom does this for the universe.

From univalence, a general \emph{structure identity principle}~\cite[Chapter~9.8]{hottbook} follows which characterises paths between structured types, such as groups and other algebraic structures.
In the case of groups, univalence implies that paths between groups correspond to group isomorphisms.
Similarly, paths between modules correspond to module isomorphisms.

\paragraph*{Propositions, sets, and groupoids}
In HoTT there is a hierarchy of \textbf{\boldmath{\(n\)}-truncated types} (or \textbf{\boldmath{\(n\)}-types}, for short) for any integer \(n \geq -2\).
In general, a type \(X\) is an \((n+1)\)-type when all the path types \(x_0 =_X x_1\) are \(n\)-types.
The recursion starts at \(-2\), when the condition is just that the map \(X \to 1\) is an equivalence, and in this case \(X\) is \textbf{contractible}.

We only deal with the bottom four levels of this hierarchy: \textbf{contractible types}, \textbf{propositions} (\((-1)\)-types), \textbf{sets} (0-types) and 1-types.
A type \(X\) is a proposition when any two points in \(X\) are equal (but there may not be any points).
A type \(X\) is a set when the path types \(x_0 =_X x_1\) are all propositions---this amounts to there being ``at most'' one path between \(x_0\) and \(x_1\).
Lastly, a type \(X\) is a 1-type when its path types are sets---in particular, for any \(x : X\), the \textbf{loop space} \(\Omega X :\jeq (x =_X x)\) is a set which is a group under path composition.
(We leave base points implicit when taking loop spaces.)

There are truncation operations which create a proposition or a set from a given type~\(X\).
We denote by \(\Tr{X}\) the propositional truncation, and by \(\pi_0{X}\) the set truncation (or \emph{set of path-components}) of \(X\).
In Coq-HoTT, the corresponding notation is \verb~merely X~ and \verb~Tr 0 X~.
The map \(\verb~tr~ : X \to \pi_0X\) sends a point to its connected component.
When we say that a type \(X\) \textbf{merely} holds, then we mean that its propositional truncation \(\Tr{X}\) holds.

\subsection{The Coq-HoTT Library}

The Coq-HoTT library~\cite{coqhott} is an open-source repository of formalised mathematics in homotopy type theory using Coq.
It is particularly aimed at developing synthetic homotopy theory, and includes theory about spheres, loop spaces, classifying spaces, modalities, ``wild \(\infty\)-categories,'' and basic results about abelian groups, to mention a few things.
The library is part of the \emph{Coq Platform} and is available through the standard \verb+opam+ package repositories.

Below we explain some of the main features of this library, and of Coq itself, which are important for the present work.

\paragraph*{Universes and cumulativity}
We assume basic familiarity with universes and universe levels in Coq, and in particular that they are \emph{cumulative}: a type \(X : \verb+Type@{u}+\) can be resized to live in \verb+Type@{v}+ under the constraint \(\verb+u+ \leq \verb+v+\).
(Here \verb+u+ and \verb+v+ are \textbf{universe levels}.)
Resizing is done implicitly by Coq.

In the Coq-HoTT library, we additionally make most of our structures cumulative.
This essentially means that resizing commutes with the formation of a data structure---i.e., it does not matter whether you resize the inputs to the data structure or whether you resize the resulting data structure.
As an example, consider the data structure \verb+prod+ which forms the product of two types in a common (for simplicity) universe level.
Suppose we have two universe levels \verb+u+ and \verb+v+ with the constraint \verb+u < v+.
Given \verb+X Y : Type@{u}+, we can form the product at level \verb~u~ and then resize, or first resize and then form the product.
By making \verb+prod+ a cumulative data structure, the two results agree (with implicit resizing):
\[ \verb+prod@{u} X Y + \jeq \verb+ prod@{v} X Y+. \]

Cumulativity of data structures is an essential Coq feature which facilitates the kind of formalisation we do in this paper.
For example, it lets us resize groups and homomorphisms.
It also lets us reduce the number of universes in some of our definitions via the following trick: instead of having separate universes for different inputs, we can often use a single universe (which represents the maximum) and leverage cumulativity.

We also make use of universe constraints since our constructions move between various universe levels.
The constraints both document and verify the mathematical intent.

\paragraph*{The \verb~WildCat~ library} \label{p:wildcat}

The \href{https://github.com/HoTT/Coq-HoTT/tree/56629c19010a7d7155b22aad4525f9b2ac1bd584/theories/WildCat}{\verb~WildCat~} namespace contains the development of ``wild \(\infty\)-categories,'' functors between such, and related things.
This library was spearheaded by Ali Caglayan, tslil clingman, Floris van Doorn, Morgan Opie, Mike Shulman, and Emily Riehl.
The concepts generalise those appearing in \cite[Section~4.3.1]{vanDoorn2018}, and are not currently present in the literature.
We explain the basics of this library which are especially relevant for our formalisation.

Starting from the notion of \textbf{graph}\href{https://github.com/HoTT/Coq-HoTT/blob/56629c19010a7d7155b22aad4525f9b2ac1bd584/theories/WildCat/Core.v\#L9}{\(^\coderef\)}---a type \(A\) with a binary operation (or \emph{correspondence}) \(\vHom\) into \(\Type\)---the notion of a \textbf{0-functor}\href{https://github.com/HoTT/Coq-HoTT/blob/56629c19010a7d7155b22aad4525f9b2ac1bd584/theories/WildCat/Core.v\#L84}{\(^\coderef\)} is that of a homomorphism of graphs:

\begin{lstlisting}
Class IsGraph (A : Type) := { Hom : A -> A -> Type }.
Class Is0Functor {A B : Type} `{IsGraph A} `{IsGraph B} (F : A -> B)
  := { fmap : forall {a b : A} (f : Hom a b), Hom (F a) (F b) }.
\end{lstlisting}
We will often use the notation \(\vHom\) in this text, leaving the graph structure implicit.

From here one could go ahead and define categories by defining a composition operation and using the identity types of the type \(\vHom(a,b)\) to express the various laws a category needs to satisfy, such as associativity of composition.
A more flexible approach is to instead allow \(\vHom(a,b)\) to itself be a graph, making \(A\) into a \textbf{2-graph}.\href{https://github.com/HoTT/Coq-HoTT/blob/56629c19010a7d7155b22aad4525f9b2ac1bd584/theories/WildCat/Core.v\#L89}{\(^\coderef\)}
This is the approach taken by \verb~WildCat~, and this flexibility is important for our formalisation.

\begin{lstlisting}
Class Is2Graph (A : Type) `{IsGraph A}
  := { isgraph_hom : forall (a b : A), IsGraph (Hom a b) }.
\end{lstlisting}

For a \(2\)-graph \(A\), a category structure can then be defined in a straightforward manner using \verb~isgraph_hom~ to express the various laws that need to hold.
This structure is bundled into a class called \verb+Is1Cat+.\href{https://github.com/HoTT/Coq-HoTT/blob/56629c19010a7d7155b22aad4525f9b2ac1bd584/theories/WildCat/Core.v\#L95}{\(^\coderef\)}
For example, associativity is expressed as follows, using the notation \verb+$==+ as a shorthand for the \(2\)-graph structure and \verb~$o~ for composition:
\begin{lstlisting}
  cat_assoc : forall (a b c d : A)
    (f : Hom a b) (g : Hom b c) (h : Hom c d),
      (h $o g) $o f $== h $o (g $o f);
\end{lstlisting}

If all the morphisms in \(A\) are invertible, then \(A\) is a \textbf{groupoid}.%
\href{https://github.com/HoTT/Coq-HoTT/blob/56629c19010a7d7155b22aad4525f9b2ac1bd584/theories/WildCat/Core.v\#L367}{\(^\coderef\)}
Finally, for the notion of a \textbf{\boldmath{\(1\)}-functor} between categories we also express the laws using the \(2\)-graph structure.%
\href{https://github.com/HoTT/Coq-HoTT/blob/56629c19010a7d7155b22aad4525f9b2ac1bd584/theories/WildCat/Core.v\#L252}{\(^\coderef\)}

\begin{lstlisting}
Class Is1Functor {A B : Type} `{Is1Cat A} `{Is1Cat B}
  (F : A -> B) `{!Is0Functor F} := {
    fmap_id : forall a, fmap F (Id a) $== Id (F a);
    fmap_comp : forall a b c (f : Hom a b) (g : Hom b c),
      fmap F (g $o f) $== fmap F g $o fmap F f;
    fmap2 : forall a b (f g : Hom a b),
      (f $== g) -> (fmap F f $== fmap F g) }.
\end{lstlisting}

The terms \verb+fmap_id+ and \verb+fmap_comp+ express that the functor \verb+F+ respects identities and composition, as usual.
If we had used identity types instead of a \(2\)-graph structure, so that \verb+f $== g+ simply meant \verb+f = g+, then \verb+F+ would automatically respect equality between morphisms, making \verb+fmap2+ redundant.
However, in the more general \(2\)-graph setup, this needs to be included as a law.

  The adjective ``wild'' is used for the sort of categories just defined to indicate that they do not capture all the coherence needed to represent \(\infty\)-categories, only the \(1\)-categorical structure.
  However, in our usage we will only encounter genuine \(1\)-categories and groupoids.
  In particular, any type \(X\) defines a groupoid via its identity types\href{https://github.com/HoTT/Coq-HoTT/blob/56629c19010a7d7155b22aad4525f9b2ac1bd584/theories/WildCat/Paths.v\#L7}{\(^\coderef\)}, and if \(X\) is a \(1\)-type then this groupoid structure captures everything about \(X\).
  This enables us to impose our own notion of paths, which we call \emph{path data} below, for certain types of interest.

\section{Yoneda Ext} \label{sec:yext}

As mentioned in the introduction, we will follow Yoneda's approach to Ext groups~\cite{Yoneda1954, Yoneda1960}, which does not require projective (or injective) resolutions, though it produces \emph{large} groups.
This approach and related theory is explained in~\cite{Mac63}, which is our main reference.
At present, the Coq-HoTT library---with which this work has been formalised---does not contain much theory related to modules over a general ring (nor the theory of abelian categories, or anything of the sort).
We therefore only formalise and state our results for abelian groups.
It is clear, however, that everything we say could be done for modules over a general ring.

For the classically-minded reader, let us also emphasise that in homotopy type theory the category of abelian groups does \emph{not} have global dimension \(1\), so that the higher Ext groups we define in \cref{sec:les} do not necessarily vanish.

\subsection{The Type of Short Exact Sequences} \label{ssec:ses}

Given two abelian groups \(A\) and \(B\), Yoneda defines a group \(\Ext^1(B,A)\) by considering the large set (or class) of all short exact sequences
\( A \xra{i} E \xra{p} B \) and taking a quotient by a certain equivalence relation.
The sequence being exact means that \(i\) is injective, \(p\) is surjective, that \(p \circ i = 0\), and that the image of \(i\) is equal to the kernel of \(p\).
We usually simply write \(E\) for the short exact sequence \(A \to E \to B\) when no confusion can arise.
The equivalence relation which Yoneda quotients out by is defined as
``\( E \sim F \) if and only if there exists an isomorphism \(E \cong F\) which respects the maps from \(A\) and to \(B\).''
Equivalently, but more topologically, one can consider the \emph{groupoid} of short exact sequences \( A \to E \to B \) and define \(\Ext^1(B,A)\) to be the set of path-components of this groupoid---see, e.g., \cite[Chapter~III]{Mac63} for details about both of these descriptions.

In homotopy type theory, given two abelian groups \(A\) and \(B\) we form the \textbf{type of short exact sequences} from \(A\) to \(B\) as the \(\Sigma\)-type over all abelian groups \(E\) equipped with an injection \(\mtt{inclusion}_E : A \to E\), a surjection \( {\verb+projection+}_E : E \to B\), and a witness that these two maps form an exact complex.
We represent this data as the following record-type:\href{https://github.com/HoTT/Coq-HoTT/blob/3062f0a152dca5b58323bffb9fceab4188d96bb1/theories/Algebra/AbSES/Core.v\#L19}{\(^\coderef\)}

\begin{lstlisting}[label=list:abses]
Record AbSES@{u v | u < v} (B A : AbGroup@{u}) : Type@{v} := {
    middle : AbGroup@{u};
    inclusion : Hom A middle;
    projection : Hom middle B;
    isembedding_inclusion : IsEmbedding inclusion;
    issurjection_projection : IsSurjection projection;
    isexact_inclusion_projection
      : IsExact (Tr (-1)) inclusion projection;
  }.
\end{lstlisting}

Note that \(\vAbSES(B,A)\) denotes short exact sequences \emph{from \(A\) to \(B\)}.
The abelian group \verb+middle+ plays the role of \(E\) in the prose above.
Here, the condition that \( \verb+projection+_E \circ \verb+inclusion+_E = 0\) is baked into the \verb+IsExact+ field, which also expresses exactness.\footnote{The term \verb~Tr (-1)~ can safely be ignored; it expresses that the induced map from \(A\) to the kernel of \(\verb~projection~_E\) is \emph{\((-1)\)-connected}, which here just means it is a surjection.}
We have included universe annotations which express that \(E\) lives in the same universe \verb+u+ as the abelian groups \(A\) and \(B\).
Accordingly, the resulting type \verb+AbSES+\((B,A)\) lives in a universe \verb+v+ which is strictly greater than \verb+u+, as in Yoneda's construction above.
The type \verb+AbSES+\((B,A)\) is pointed by the \textbf{trivial short exact sequence}\href{https://github.com/HoTT/Coq-HoTT/blob/3062f0a152dca5b58323bffb9fceab4188d96bb1/theories/Algebra/AbSES/Core.v\#L60}{\(^\coderef\)}
\(A \to A \oplus B \to B.\)

We now define \(\vExt^1(B,A)\) as the set-truncation of the type of short exact sequences.%
\href{https://github.com/HoTT/Coq-HoTT/blob/3062f0a152dca5b58323bffb9fceab4188d96bb1/theories/Algebra/AbSES/Ext.v\#L21}{\(^\coderef\)}

\begin{lstlisting}[label=list:ext]
Definition Ext (B A : AbGroup) := Tr 0 (AbSES B A).
\end{lstlisting}

In \cref{ssec:baer-sum} we make the set \(\vExt^1(B,A)\) into an abelian group via the \emph{Baer sum}.
These abelian groups, and their higher variants defined in \cref{sec:les}, are our main objects of study.

Whenever we define a new type in homotopy type theory, the first thing we often do is to characterise its path types.
Theorem~7.3.12 of \cite{hottbook} characterises paths in truncations, yielding
\[ \big( \tr{E}_0 =_{\Ext^1} \tr{F}_0 \big) \ \simeq \ \Tr{E = F} \]
for any \(E, F : \vAbSES(B,A)\).
As such, it suffices to understand paths in \verb+AbSES+\((B,A)\).
These are in turn characterised by Theorem~2.7.2 of loc.\ cit., which characterises paths in general \(\Sigma\)-types, combined with the fact that paths in \verb+AbGroup+ are isomorphisms.
In our case, the result is that paths between short exact sequences correspond to isomorphisms between the \verb+middle+s making the appropriate triangles commute.
We refer to this data as \emph{path data}, and bundle it into a separate type (where \verb+*+ denotes products of types):\href{https://github.com/HoTT/Coq-HoTT/blob/3062f0a152dca5b58323bffb9fceab4188d96bb1/theories/Algebra/AbSES/Core.v\#L80}{\(^\coderef\)}

\begin{lstlisting}[label=list:abses-path-data-iso]
Definition abses_path_data_iso {B A : AbGroup} (E F : AbSES B A)
  := {phi : Iso E F & (phi $o inclusion E == inclusion F)
                      * (projection E == projection F $o phi)}.
\end{lstlisting}
Here \verb+Iso+ forms the type of isomorphisms between two groups.
From our discussion above, for any \(E, F : \vAbSES(B,A)\), we get an equivalence of types%
\href{https://github.com/HoTT/Coq-HoTT/blob/3062f0a152dca5b58323bffb9fceab4188d96bb1/theories/Algebra/AbSES/Core.v\#L104}{\(^\coderef\)}
\[ (E =_{\vAbSES(B,A)} F) \ \simeq \ \verb+abses_path_data_iso+(E,F) . \]
However, a bit more can be said: the \emph{short five lemma}\href{https://github.com/HoTT/Coq-HoTT/blob/3062f0a152dca5b58323bffb9fceab4188d96bb1/theories/Algebra/AbSES/Core.v\#L146}{\(^\coderef\)}
implies that if we replace \verb+Iso+ by \verb+Hom+ above, then it still follows that \verb+phi+ is an isomorphism.
We define \verb+abses_path_data+%
\href{https://github.com/HoTT/Coq-HoTT/blob/3062f0a152dca5b58323bffb9fceab4188d96bb1/theories/Algebra/AbSES/Core.v\#L184}{\(^\coderef\)}
as \verb+abses_path_data_iso+ above, but with \verb+Hom+ in place of \verb+Iso+.
It is convenient to have both types around: it is easier to construct an element of \verb+abses_path_data+; however we will see situations later on where it is convenient to keep track of a \emph{specific} inverse to the underlying map, which \verb+abses_path_data_iso+ lets us do.

\begin{definition}
  The type \(\vAbSES(B,A)\) is a groupoid whose graph structure is given by \verb~abses_path_data_iso~ and a corresponding category structure.
  For the \(2\)-graph structure, we assert that two path data are equal just when their underlying maps are homotopic.%
  \href{https://github.com/HoTT/Coq-HoTT/blob/56629c19010a7d7155b22aad4525f9b2ac1bd584/theories/Algebra/AbSES/Core.v\#L286}{\(^\coderef\)}%
\end{definition}

This definition is justified by the preceding discussion, which yields:

\begin{lemma} \label{lem:abses-path-data}
  For any \(E, F : {\verb+AbSES+}(B,A) \), there are equivalences of types\href{https://github.com/HoTT/Coq-HoTT/blob/3062f0a152dca5b58323bffb9fceab4188d96bb1/theories/Algebra/AbSES/Core.v\#L209}{\(^\coderef\)}
  \[ (E = F) \ \simeq \ {\verb+abses_path_data_iso+}(E, F) \ \simeq \ \verb+abses_path_data+(E,F). \]
\end{lemma}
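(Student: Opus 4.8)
The plan is to establish the two equivalences separately. The first, \((E = F) \simeq \verb+abses_path_data_iso+(E,F)\), follows by unfolding the record \(\vAbSES(B,A)\) as an iterated \(\Sigma\)-type and combining the standard characterisation of paths in \(\Sigma\)-types with the structure identity principle for abelian groups. The second, \(\verb+abses_path_data_iso+(E,F) \simeq \verb+abses_path_data+(E,F)\), follows from the short five lemma. For the first equivalence, I would start by noting that \(\vAbSES(B,A)\) is equivalent to the \(\Sigma\)-type over \(E : \verb+AbGroup+\), \(i : \vHom(A,E)\) and \(p : \vHom(E,B)\) of the conjunction of the three remaining fields, each of which is a proposition (it is a proposition for a map to be an embedding, for a map to be a surjection, and for a pair of maps to be exact). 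Since the first projection of a \(\Sigma\)-type with propositional fibres is an embedding, these fields do not affect the path types, and \cite[Theorem~2.7.2]{hottbook} reduces \((E = F)\) to the data of a path \(q : \verb+middle+_E = \verb+middle+_F\) together with paths \(q_*\verb+inclusion+_E = \verb+inclusion+_F\) and \(q_*\verb+projection+_E = \verb+projection+_F\). By the structure identity principle, \(q\) amounts to an isomorphism \(\phi : \vIso(E,F)\) of the underlying middle groups; unfolding transport in the families \(\vHom(A,-)\) and \(\vHom(-,B)\) turns the two remaining conditions into \(\phi \circ \verb+inclusion+_E = \verb+inclusion+_F\) (post-composition, since the middle group is the codomain there) and \(\verb+projection+_E = \verb+projection+_F \circ \phi\) (contravariantly, since the middle group is the domain). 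Finally, function extensionality together with the fact that being a homomorphism is a proposition lets me replace these equalities of homomorphisms by the pointwise homotopies appearing in the definition of \verb+abses_path_data_iso+.

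For the second equivalence, observe that the evident forgetful map \(\verb+abses_path_data_iso+(E,F) \to \verb+abses_path_data+(E,F)\) merely discards the inverse data witnessing that \(\phi\) is an isomorphism. Because the hom-types of abelian groups are sets, being an isomorphism is a proposition, so \(\vIso(E,F)\) is the subtype of \(\vHom(E,F)\) spanned by the isomorphisms. Reassociating the \(\Sigma\)-types, \(\verb+abses_path_data_iso+(E,F)\) is then the subtype of \(\verb+abses_path_data+(E,F)\) consisting of those path data whose underlying homomorphism is an isomorphism, and the forgetful map is the corresponding subtype inclusion. But the short five lemma says precisely that any homomorphism compatible with the inclusions and projections is automatically an isomorphism, so every element of \(\verb+abses_path_data+(E,F)\) lies in this subtype; hence the inclusion, and thus the forgetful map, is an equivalence.

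I expect the main obstacle to be the bookkeeping in the first equivalence: identifying exactly which fields of the record are propositional so that they may be dropped from the path characterisation, getting the variances right when relating transport of a homomorphism along a univalence-induced path of groups to composition with the corresponding isomorphism, and matching the resulting equalities of homomorphisms with the pointwise homotopies used in the definitions of path data. The second equivalence is comparatively routine once the short five lemma is available. In the formalisation, both steps are eased by the \verb+WildCat+ infrastructure, which lets us phrase the commuting-triangle conditions via an imposed \(2\)-graph structure rather than via raw identity types; and it is precisely this lemma that licenses treating \(\vAbSES(B,A)\) as a groupoid whose morphisms are path data.
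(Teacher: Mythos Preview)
Your proposal is correct and matches the paper's approach essentially line for line: the first equivalence is obtained by unfolding the record as a \(\Sigma\)-type, discarding the propositional fields, and applying the \(\Sigma\)-type path characterisation together with the structure identity principle for abelian groups; the second equivalence follows from the short five lemma together with the fact that being an isomorphism is a proposition. The paper presents this argument as the discussion immediately preceding the lemma rather than as a separate proof block, but the content is the same.
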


Though elementary, this lemma has an interesting consequence.
This statement appears as the \(n, i = 1\) case of \cite[Theorem~1]{Retakh1986}.

\begin{proposition} \label{prop:retakh}
  The loop space of \(\verb+AbSES+(B,A)\) is naturally isomorphic to \(\vHom(B,A)\).\href{https://github.com/HoTT/Coq-HoTT/blob/3062f0a152dca5b58323bffb9fceab4188d96bb1/theories/Algebra/AbSES/Core.v\#L481}{\(^\coderef\)}
\end{proposition}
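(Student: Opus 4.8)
The plan is to reduce to the path-data description of \cref{lem:abses-path-data} and then identify the resulting type explicitly. Write \(E_0 : \vAbSES(B,A)\) for the trivial short exact sequence \(A \to A \oplus B \to B\), which is the basepoint. By \cref{lem:abses-path-data} the loop space \(\Omega\vAbSES(B,A)\) is equivalent, as a type, to \(\verb+abses_path_data+(E_0, E_0)\): the type of group homomorphisms \(\phi : A \oplus B \to A \oplus B\) with \(\phi \circ \verb+inclusion+_{E_0} = \verb+inclusion+_{E_0}\) and \(\verb+projection+_{E_0} = \verb+projection+_{E_0} \circ \phi\). I would first unwind these two conditions using the (co)universal properties of the biproduct \(A \oplus B\): the first says \(\phi(a, 0) = (a, 0)\), the second says the \(B\)-component of \(\phi(a,b)\) equals \(b\), so that \(\phi(a,b) = (a + f(b), b)\) where \(f(b) :\jeq \verb+fst+(\phi(0,b))\), and this \(f : B \to A\) is automatically a homomorphism, being a composite of homomorphisms. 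Conversely any \(f : \vHom(B,A)\) determines such a \(\phi\), with two-sided inverse \((a,b) \mapsto (a - f(b), b)\). This establishes an equivalence \(\verb+abses_path_data+(E_0, E_0) \simeq \vHom(B,A)\), and composing with \cref{lem:abses-path-data} gives an equivalence \(\Omega\vAbSES(B,A) \simeq \vHom(B,A)\) of (0-truncated) types.

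Next I would transport the group structure. The operation on \(\Omega\vAbSES(B,A)\) is path composition, which under the equivalence of \cref{lem:abses-path-data} corresponds to composition of the underlying homomorphisms \(\phi\) — this is exactly the compatibility of the imposed groupoid structure on \(\vAbSES(B,A)\) with its identity-type groupoid structure, which one checks when justifying the definition of that structure. If \(\phi_i(a,b) = (a + f_i(b), b)\) then \((\phi_2 \circ \phi_1)(a,b) = (a + f_1(b) + f_2(b), b)\), which corresponds to \(f_1 + f_2\) (independently of the order, since \(\vHom(B,A)\) is abelian). Hence the composite equivalence sends path composition to addition of homomorphisms, and an equivalence of types preserving the group operations is a group isomorphism.

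Finally, for naturality I would record that both \(\Omega\vAbSES(-,-)\) and \(\vHom(-,-)\) are functorial — contravariantly in \(B\) (via pullback, respectively precomposition) and covariantly in \(A\) (via pushout, respectively postcomposition) — and verify that the explicit isomorphism commutes with these actions. Since pullback and pushout act on \(E_0\) in an elementary way, this amounts to checking the formula \(\phi(a,b) = (a + f(b), b)\) is respected, a direct computation.

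I expect the main obstacle to be bookkeeping rather than any conceptual difficulty: making precise that path composition in \(\vAbSES(B,A)\) corresponds to composition of path data (so the equivalence is a homomorphism), and checking naturality against the pullback/pushout functoriality. Each individual verification is routine, but they must be assembled carefully; the core equivalence with \(\vHom(B,A)\) itself is immediate from the biproduct structure.
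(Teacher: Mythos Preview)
Your proposal is correct and follows essentially the same route as the paper: reduce via \cref{lem:abses-path-data} to path data on the split sequence, then observe that such a \(\phi\) is determined by the composite \(B \to A\oplus B \xra{\phi} A\oplus B \to A\), which is exactly your \(f\). You have spelled out more explicitly than the paper does how the group structures match (path composition \(\leadsto\) composition of \(\phi\)'s \(\leadsto\) addition of \(f\)'s) and how naturality is verified; the paper defers those details to the formalisation.
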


\begin{proof}
  It suffices, by the previous lemma, to give an isomorphism  between \( \vHom(B,A) \) and
   \(\verb+abses_path_data+(A \oplus B, A \oplus B) \).
   One can easily check that a map \(\phi : A \oplus B \to A \oplus B\) subject to the constraints of path data, is uniquely determined by the composite\href{https://github.com/HoTT/Coq-HoTT/blob/3062f0a152dca5b58323bffb9fceab4188d96bb1/theories/Algebra/AbSES/Core.v\#L446}{\(^\coderef\)}
   \[ B \to A \oplus B \xra{\phi} A \oplus B \to A. \]
   Moreover, this association defines a group isomorphism---details are in the formalisation.\href{https://github.com/jarlg/Yoneda-Ext/blob/0d8bfe8e168bbdf325e805d7268e826e889189f0/Lemmas.v\#L28}{\(^\coderef\)}
\end{proof}

To formalise the previous proposition, we first developed basic theory about biproducts of abelian groups which now live in \verb~Algebra.AbGroups.Biproduct~.

In ordinary homological algebra, an abelian group \(P\) is \emph{projective} if for any homomorphism \(f : P \to B\) and epimorphism \(p : A \to B\), there exists a \emph{lift} \(l : P \to A\) such that \( f = e \circ l \).
It is well-known that \(\Ext^1(P,A)\) always vanishes when \(P\) is projective, and that this property characterises projectivity.
In our setting, we define an abelian group \(P\) to be \textbf{projective} if for any homomorphism \(f\) and epimorphism \(p\) as above, there \emph{merely} exists a lift \(l\) such that \(f = l \circ l \).
The propositional truncation makes this into a \emph{property} of an abelian group, and not a structure.
In Coq, we express this as a type-class:\href{https://github.com/HoTT/Coq-HoTT/blob/3062f0a152dca5b58323bffb9fceab4188d96bb1/theories/Algebra/AbGroups/AbProjective.v\#L19}{\(^\coderef\)}

\begin{lstlisting}[label=list:abprojective]
Class IsAbProjective (P : AbGroup) : Type :=
  isabprojective : forall (A B : AbGroup),
    forall (f : Hom P B), forall (e : Hom A B),
    IsSurjection e -> merely (exists l : P $-> A, f == e $o l).
\end{lstlisting}

As in the classical case, projectives are characterised by the vanishing of Ext:

\begin{proposition} \label{prop:projective-iff-ext-vanishes}
  An abelian group \(P\) is projective if and only if 
  \(\vExt^1(P,A) = 0\) for all \(A\).\href{https://github.com/HoTT/Coq-HoTT/blob/3062f0a152dca5b58323bffb9fceab4188d96bb1/theories/Algebra/AbSES/Ext.v\#L118}{\(^\coderef\)}
\end{proposition}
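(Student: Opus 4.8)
The plan is to first reduce the statement to a concrete claim about short exact sequences, and then prove the two implications.

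\textbf{Reduction.} By definition $\vExt^1(P,A) \jeq \pi_0\,\vAbSES(P,A)$, pointed by the class of the split extension $A \to A \oplus P \to P$. So $\vExt^1(P,A) = 0$ holds precisely when this set is contractible, and since $\tr{-}_0$ is surjective this is equivalent to $\tr{E}_0 = \tr{\pt}_0$ for every $E : \vAbSES(P,A)$. Combining the characterisation of paths in set-truncations (Theorem~7.3.12 of \cite{hottbook}) with \cref{lem:abses-path-data}, the equation $\tr{E}_0 = \tr{\pt}_0$ is equivalent to the proposition that there merely exists path data from the split extension to $E$; since such path data underlies an isomorphism by the short five lemma, this says exactly that $E$ \emph{merely splits}, i.e.\ that the projection $E \to P$ merely admits a section. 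Hence it suffices to prove: $P$ is projective iff every short exact sequence $A \to E \to P$ merely splits, for every abelian group $A$.

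\textbf{($\Rightarrow$)} Let $A \xra{i} E \xra{p} P$ be a short exact sequence. Applying projectivity of $P$ to the identity $\id_P$ and the surjection $p$, we merely obtain a section $s : P \to E$ with $p \circ s = \id_P$. Using the biproduct of abelian groups, form the homomorphism $\psi \jeq [\,i, s\,] : A \oplus P \to E$, $(a,x) \mapsto i(a) + s(x)$. Then $\psi \circ \iota_A = i$ and $p \circ \psi = [\,p\circ i, p\circ s\,] = [\,0, \id_P\,]$, the projection of the split extension, so $\psi$ is path data from the split extension to $E$; by \cref{lem:abses-path-data} this gives a path, so $E$ splits. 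As the goal $\tr{E}_0 = 0$ is a proposition, we may discharge the truncation on $s$.

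\textbf{($\Leftarrow$)} Given abelian groups $A, B$, a homomorphism $f : P \to B$ and a surjection $e : A \to B$, consider the short exact sequence $\chi : \vAbSES(B, \ker e)$ given by $\ker e \to A \xra{e} B$, and pull it back along $f$. This yields $f^{*}\chi : \vAbSES(P, \ker e)$ whose middle is the pullback $A \times_B P$, with projections $\pi_A$ to $A$ and $\pi_P$ to $P$ satisfying $e \circ \pi_A = f \circ \pi_P$. By hypothesis $\vExt^1(P, \ker e) = 0$ — resizing $\ker e$ and $P$ into a common universe via cumulativity if needed — so by the reduction the extension $f^{*}\chi$ merely splits, giving a section $s : P \to A \times_B P$ of $\pi_P$. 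Then $l \jeq \pi_A \circ s : P \to A$ satisfies $e \circ l = e \circ \pi_A \circ s = f \circ \pi_P \circ s = f$, and since ``$f$ merely lifts through $e$'' is a proposition we may discharge the truncation on $s$, proving $P$ projective. Nothing here is deep: the real content is the reduction step — identifying the three equivalent descriptions of a trivial $\vExt^1$-class (path data to the point, splitting, equality to the point) — and the main thing to be careful about is the propositional-truncation bookkeeping, so that the merely-existing lift from projectivity and the merely-existing section from the vanishing of $\vExt^1$ are only ever used to prove propositions.
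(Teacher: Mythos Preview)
Your proof is correct and is the standard argument; the paper does not spell out a proof in the text (it defers to the formalisation), but what you have written is exactly the expected approach: reduce ``$\vExt^1(P,A)=0$'' to ``every extension of $P$ by $A$ merely splits'' via \cref{lem:abses-path-data}, then use a section of the projection for ($\Rightarrow$) and the pullback of $\ker e \to A \to B$ along $f$ for ($\Leftarrow$). The care you take with propositional truncation is appropriate and matches how the formalisation must proceed.
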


From the induction principle of \(\Zb\) it follows that \(\Zb\) is projective\href{https://github.com/HoTT/Coq-HoTT/blob/3062f0a152dca5b58323bffb9fceab4188d96bb1/theories/Algebra/AbGroups/Cyclic.v\#L56}{\(^\coderef\)} in the sense we defined above.
Consequently \(\vExt^1(\Zb,A) = 0\) for any abelian group \(A\), and we will use this later on.

\begin{remark}
  There is a subtle point related to projectivity that merits discussion.
  Our definition of projectivity only requires the lift \(l\) to \emph{merely} exist (a property), but one could have asked for actual existence (a structure).
  There is no concept of ``mere existence'' in ordinary mathematics, and when translating concepts into HoTT we have to carefully choose to make something a structure or a property. 
  In this case, our definition of projectivity is justified by \cref{prop:projective-iff-ext-vanishes}.
  If we had made projectivity a structure, then not even \(\Zb\) would be projective, which we need it to be.
\end{remark}

\subsection{Ext as a Bifunctor}

Some of the important structure of \(\Ext^1\) is captured by the fact that it defines a \emph{bifunctor}
\( \Ext^1(-,-) : \Ab^{\op} \times \Ab \to \Ab. \)
This means that \(\Ext^1(-,-)\) is a functor in each variable and that the following ``bifunctor law'' holds:
\begin{equation} \label{eqn:bifunctor-law}
  \Ext^1(f,-) \circ \Ext^1(-,g) = \Ext^1(-,g) \circ \Ext^1(f,-).
\end{equation}

We added a basic implementation of bifunctors to the \verb~WildCat~ library for our purposes, asserting the bifunctor law using the \(2\)-graph structure:\href{https://github.com/HoTT/Coq-HoTT/blob/56629c19010a7d7155b22aad4525f9b2ac1bd584/theories/WildCat/Bifunctor.v\#L7}{\(^\coderef\)}
\begin{lstlisting}
Class IsBifunctor {A B C : Type} `{IsGraph A, IsGraph B, Is1Cat C}
  (F : A -> B -> C) := {
    bifunctor_isfunctor_10 : forall a, Is0Functor (F a);
    bifunctor_isfunctor_01 : forall b, Is0Functor (fun a => F a b);
    bifunctor_isbifunctor :
      forall a0 a1 (f : Hom a0 a1), forall b0 b1 (g : Hom b0 b1),
        fmap (F _) g $o fmap (flip F _) f
          $== fmap (flip F _) f $o fmap (F _) g }.
\end{lstlisting}
Here \verb~flip~ is the map which reverses the order of arguments of a binary function.
We note that in order to state the bifunctor law, we only require \verb+F+ to be a \(0\)-functor in each variable.
As such we only include those instances in this class. 

The bifunctor instance of \(\vExt^1\) will come from a bifunctor instance of \(\vAbSES\), so we work with the latter.
First of all, \(\vAbSES : \verb~AbGroup~^{\op} \to \verb~AbGroup~ \to \verb~Type~\) becomes a \(0\)-functor in each variable by pulling back and pushing out, respectively.

\begin{lemma}
  Let \(g : B' \to B\) be a homomorphism of abelian groups.
  For any short exact sequence \( A \to E \to B \), we have a short exact sequence \( A \to g^*(E) \to B' \).\href{https://github.com/HoTT/Coq-HoTT/blob/56629c19010a7d7155b22aad4525f9b2ac1bd584/theories/Algebra/AbSES/Pullback.v\#L12}{\(^\coderef\)}
  Moreover, if \(E\) is trivial, then so is the short exact sequence \(g^*(E)\).\href{https://github.com/HoTT/Coq-HoTT/blob/56629c19010a7d7155b22aad4525f9b2ac1bd584/theories/Algebra/AbSES/Pullback.v\#L208}{\(^\coderef\)}
\end{lemma}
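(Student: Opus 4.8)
The plan is to construct $g^*(E)$ as the pullback (fibre product) $E \times_B B'$. Concretely, let $\psi : E \oplus B' \to B$ be the homomorphism $(e, b') \mapsto p(e) - g(b')$ and set $g^*(E) := \ker\psi$, the subgroup of $E \oplus B'$ consisting of pairs $(e, b')$ with $p(e) = g(b')$; being a kernel, it is an abelian group in the same universe as $E$ and $B'$, so the resulting sequence does live in $\vAbSES(B', A)$. The inclusion $i' : A \to g^*(E)$ is $a \mapsto (i(a), 0)$, which is well defined since $p(i(a)) = 0 = g(0)$, and the projection $p' : g^*(E) \to B'$ is the second projection $(e, b') \mapsto b'$.

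Next I would check the three exactness conditions. The map $i'$ is an embedding because composing it with the embedding $g^*(E) \hookrightarrow E \oplus B'$ gives $a \mapsto (i(a), 0)$, which is injective since $i$ is. For surjectivity of $p'$: given $b' : B'$, surjectivity of $p$ provides a merely-existing $e : E$ with $p(e) = g(b')$, and then $(e, b') : g^*(E)$ satisfies $p'(e, b') = b'$; since being in the image is a proposition, we may discharge the truncation. Finally, $p' \circ i' = 0$ holds on the nose, and any $(e, b')$ in the kernel of $p'$ has $b' = 0$, hence $p(e) = g(0) = 0$, so exactness of the original sequence yields a merely-existing $a : A$ with $i(a) = e$, i.e.\ $(e, 0) = i'(a)$ merely --- again enough, since the goal is a proposition.

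For the \enquote{moreover} part I would read \enquote{$E$ is trivial} as $\Tr{E = \pt}$, where $\pt$ denotes the trivial sequence $A \to A \oplus B \to B$. Since $g^*$ is in particular a function $\vAbSES(B, A) \to \vAbSES(B', A)$, applying it inside the truncation turns $\Tr{E = \pt}$ into $\Tr{g^*(E) = g^*(\pt)}$, so it suffices to prove $g^*(\pt) = \pt$ in $\vAbSES(B', A)$. For this I would invoke \cref{lem:abses-path-data} and supply the path data given by the isomorphism $g^*(A \oplus B) \xra{\sim} A \oplus B'$ that discards the redundant component, $((a, b), b') \mapsto (a, b')$, together with the routine check that it commutes with the two inclusions from $A$ and the two projections to $B'$. (Alternatively, triviality of $E$ furnishes a section $s : B \to E$ of $p$, and then $b' \mapsto (s(g(b')), b')$ is a section of $p'$, exhibiting $g^*(E)$ as trivial.)

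I expect the main obstacle to be bookkeeping rather than conceptual difficulty: setting up the subgroup (kernel) structure on $g^*(E)$, and threading the two propositional truncations --- in the surjectivity of $p'$ and in middle exactness --- through their eliminations so that each lands in the intended proposition. Conceptually this is just the classical pullback of a short exact sequence, and the same construction works verbatim for modules over an arbitrary ring.
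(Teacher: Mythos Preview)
Your proposal is correct and is exactly the standard construction the paper has in mind: the paper does not spell out a proof in prose (it only links to the formalisation), but the referenced code builds \(g^*(E)\) as the fibre product \(E \times_B B'\) with the inclusion \(a \mapsto (i(a),0)\) and second projection, and verifies the exactness conditions just as you describe; the ``moreover'' clause is likewise handled by exhibiting path data \(g^*(A \oplus B) \simeq A \oplus B'\). Your alternative via a section also works and is essentially equivalent.
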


Dually, one can push out a short exact sequence \(A \to E \to B\) along a map \(f : A \to A'\) to get a short exact sequence \(A' \to f_*(A) \to B\).\href{https://github.com/HoTT/Coq-HoTT/blob/56629c19010a7d7155b22aad4525f9b2ac1bd584/theories/Algebra/AbSES/Pushout.v\#L12}{\(^\coderef\)}

We supply careful proofs that pushout and pullback respect composition of pointed maps\href{https://github.com/HoTT/Coq-HoTT/blob/56629c19010a7d7155b22aad4525f9b2ac1bd584/theories/Algebra/AbSES/Pullback.v\#L280}{\(^\coderef\)} and homotopies between maps,\href{https://github.com/HoTT/Coq-HoTT/blob/56629c19010a7d7155b22aad4525f9b2ac1bd584/theories/Algebra/AbSES/Pullback.v\#L387}{\(^\coderef\)} and that pushing out along the identity map gives the pointed identity map.\href{https://github.com/HoTT/Coq-HoTT/blob/56629c19010a7d7155b22aad4525f9b2ac1bd584/theories/Algebra/AbSES/Pullback.v\#L237}{\(^\coderef\)}
These identities could be shown with shorter proofs, however in \cref{sec:fibre-sequence} we will have to prove coherences involving the paths constructed here, and these coherences are simpler to solve when phrased in terms of path data.
In any case, these proofs make \(\vAbSES\) into a \(1\)-functor in each variable.\href{https://github.com/HoTT/Coq-HoTT/blob/56629c19010a7d7155b22aad4525f9b2ac1bd584/theories/Algebra/AbSES/Pullback.v\#L493}{\(^\coderef\)}\href{https://github.com/HoTT/Coq-HoTT/blob/56629c19010a7d7155b22aad4525f9b2ac1bd584/theories/Algebra/AbSES/Pushout.v\#L443}{\(^\coderef\)}

For the bifunctor law we make use of the following proposition, which is remarkably useful for showing that a given extension is a pullback of another one.

\begin{proposition} \label{prop:pullback-characterisation}
  Suppose given the following diagram with short exact rows:
  \[ \begin{tikzcd}
      A \dar["\alpha"] \rar & E' \rar \dar & B' \dar["g"] \\
      A \rar & E \rar & B \period
  \end{tikzcd} \]
If \(\alpha = \id\) then the top row is equal to the pullback of the bottom row along \(g\).\href{https://github.com/HoTT/Coq-HoTT/blob/56629c19010a7d7155b22aad4525f9b2ac1bd584/theories/Algebra/AbSES/Pullback.v\#L91}{\(^\coderef\)}
\end{proposition}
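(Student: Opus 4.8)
The plan is to exploit the universal property of the pullback of abelian groups. Write \(i', p'\) for the inclusion and projection of the top row \(A \to E' \to B'\), and \(i, p\) for those of the bottom row \(A \to E \to B\); recall that \(g^*(E)\) has underlying group the fibred product \(E \times_B B' \subseteq E \oplus B'\), with inclusion \(a \mapsto (i(a), 0)\) and projection \((e, b') \mapsto b'\). The given diagram furnishes a middle homomorphism \(\beta : E' \to E\) together with the commutativity witnesses \(\beta \circ i' = i \circ \alpha\) and \(p \circ \beta = g \circ p'\).

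First I would feed the equation \(p \circ \beta = g \circ p'\) into the universal property of the pullback to obtain a homomorphism \(\theta : E' \to g^*(E)\), explicitly \(e' \mapsto (\beta(e'), p'(e'))\). Next I would check that \(\theta\) assembles into an element of \(\verb+abses_path_data+(E', g^*(E))\), i.e.\ that it respects the map from \(A\) and the map to \(B'\). Respecting the projection is immediate, since the projection of \(g^*(E)\) is the second coordinate and \(\theta\) has \(p'\) in its second coordinate. For the inclusion I would compute directly, using the explicit description of \(i_{g^*(E)}\) and the hypothesis \(\alpha = \id\):
\[ \theta(i'(a)) \ = \ (\beta(i'(a)),\, p'(i'(a))) \ = \ (i(\alpha(a)),\, 0) \ = \ (i(a),\, 0) \ = \ i_{g^*(E)}(a) \period \]
(Equivalently, one may observe that the pair of maps \(g^*(E) \to E\) and \(g^*(E) \to B'\) is jointly monic and compare composites.)

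Having produced an element of \(\verb+abses_path_data+(E', g^*(E))\), I would conclude with \cref{lem:abses-path-data}, which upgrades it to a path \(E' = g^*(E)\) in \(\vAbSES(B', A)\); note that it is the short five lemma, built into the equivalence of that lemma, which guarantees \(\theta\) is in fact an isomorphism without our having to exhibit an inverse. Mathematically the argument is routine; in the formalisation the real effort goes into packaging \(\theta\)—the universal-property map together with its two defining equations and the triangle witnesses—in the \verb+abses_path_data+ form, with enough control that the coherences needed in \cref{sec:fibre-sequence} can later be discharged conveniently. That bookkeeping, rather than any conceptual difficulty, is the main obstacle.
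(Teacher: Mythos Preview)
Your proposal is correct and follows essentially the same approach as the paper: use the universal property of the pullback (via the commuting right square) to produce the comparison map \(E' \to g^*(E)\), verify it respects the inclusion and projection, and then invoke \cref{lem:abses-path-data} to obtain the path. You simply spell out more of the details than the paper does.
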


\begin{proof}
  Since the right square commutes, we get a map \(E' \to g^*(E)\) by the universal property of the pullback.
  This map respects the inclusions and projections, and therefore defines a path by \cref{lem:abses-path-data}.
\end{proof}

There is a dual statement for pushouts in which the rightmost map must be the identity.\href{https://github.com/HoTT/Coq-HoTT/blob/56629c19010a7d7155b22aad4525f9b2ac1bd584/theories/Algebra/AbSES/Pushout.v\#L124}{\(^\coderef\)}

\begin{corollary}
  Any diagram with short exact rows as follows yields a path \(f_*(E) = g^*(F)\).\href{https://github.com/HoTT/Coq-HoTT/blob/56629c19010a7d7155b22aad4525f9b2ac1bd584/theories/Algebra/AbSES/BaerSum.v\#L28}{\(^\coderef\)}
  \[ \begin{tikzcd}
      A \dar["f"] \rar & E \dar \rar & B' \dar["g"] \\
      A' \rar & F \rar & B \period
    \end{tikzcd} \]
\end{corollary}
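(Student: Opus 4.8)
The plan is to reduce the statement to the pushout version of \cref{prop:pullback-characterisation} (``the dual statement for pushouts in which the rightmost map must be the identity''). Write the top row as $A \xra{\iota} E \xra{\pi} B'$ and the bottom row as $A' \xra{\iota'} F \xra{\pi'} B$, with vertical maps $f \colon A \to A'$, some $\phi \colon E \to F$, and $g \colon B' \to B$. I would produce an extension $A' \to g^*(F) \to B'$ together with a map from the top row into it making the rightmost vertical map the identity on $B'$; the dual proposition then identifies $g^*(F)$ with the pushout of the top row along $f$, which is exactly the asserted path $f_*(E) = g^*(F)$ in $\vAbSES(B', A')$.

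Concretely, first I would form the pullback extension $A' \xra{\iota_{g^*(F)}} g^*(F) \xra{\pi_{g^*(F)}} B'$, whose middle group is the pullback $F \times_B B'$. The right-hand square of the given diagram says $\pi' \circ \phi = g \circ \pi$, so the universal property of the pullback supplies a homomorphism $\psi \colon E \to g^*(F)$, given on elements by $\psi(e) = (\phi(e), \pi(e))$, which by construction satisfies $\pi_{g^*(F)} \circ \psi = \pi$. It then remains to check that $\psi$ is compatible with the inclusions, i.e.\ $\psi \circ \iota = \iota_{g^*(F)} \circ f$. Evaluating, $\psi(\iota(a)) = (\phi(\iota(a)), \pi(\iota(a))) = (\iota'(f(a)), 0)$, using commutativity of the left square of the given diagram and exactness of the top row (so $\pi \circ \iota = 0$); and $\iota_{g^*(F)}(f(a)) = (\iota'(f(a)), 0)$ by the definition of the pullback inclusion. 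Hence $\psi$ respects inclusions and projections.

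This assembles into a diagram with short exact rows
\[ \begin{tikzcd}
    A \dar["f"'] \rar["\iota"] & E \dar["\psi"] \rar["\pi"] & B' \dar["\id"] \\
    A' \rar["\iota_{g^*(F)}"'] & g^*(F) \rar["\pi_{g^*(F)}"'] & B'
\end{tikzcd} \]
in which the rightmost vertical map is $\id_{B'}$. By the dual of \cref{prop:pullback-characterisation}, the bottom row equals the pushout of the top row along $f$; that is, $g^*(F) = f_*(E)$, as required. (All of the ``compatibility'' checks above are, via \cref{lem:abses-path-data}, just homotopies between homomorphisms.)

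There is no deep difficulty here, but the step deserving the most care is the verification that the induced map $\psi$ meshes with the inclusions — the commutativity of the left square — since this is the only place where exactness of the top row is used, and it is the reason the hypotheses suffice. One could equally run the mirror-image argument, pushing out the top row first to obtain a map $f_*(E) \to F$ lying over $g$ and applying \cref{prop:pullback-characterisation} directly; in the formalisation one would choose whichever orientation fits better with the already-available lemmas about pushout and pullback, and then glue $\psi$ into the diagram above.
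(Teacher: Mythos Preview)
Your argument is correct and is precisely the route the paper intends: the corollary is stated without proof immediately after \cref{prop:pullback-characterisation} and its dual, and your reduction---factor the given morphism of extensions through \(g^*(F)\) to obtain a diagram with \(\id_{B'}\) on the right, then invoke the pushout version---is exactly how one extracts the statement from those two results. The mirror-image variant you mention (pushing out first and applying \cref{prop:pullback-characterisation} directly) is equally valid and equally implicit in the paper.
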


The corollary lets us swiftly show bifunctoriality:

\begin{proposition}
  The binary map \(\vAbSES : \verb~AbGroup~^{\op} \to \verb~AbGroup~ \to \Type \) is a bifunctor.\href{https://github.com/HoTT/Coq-HoTT/blob/56629c19010a7d7155b22aad4525f9b2ac1bd584/theories/Algebra/AbSES/BaerSum.v\#L223}{\(^\coderef\)}
\end{proposition}

\begin{proof}
  Consider a short exact sequence \(A \to E \to B\) along with two homomorphisms \(f : A \to A' \) and \(g : B' \to B \).
  There is an obvious diagram with short exact rows:
  \[ \begin{tikzcd}
      A \dar["f"] \rar & g^*(E) \dar \rar & B' \dar["g"] \\
      A' \rar & f_*(E) \rar & B \period
    \end{tikzcd} \]
  which by the previous corollary yields a path \(f_*(g^*(E)) = g^*(f_*(E))\), as required.
\end{proof}

\begin{remark}
  The results from \cref{ssec:baer-sum} will show that \(\vAbSES\) is an \emph{\(H\)-space}.\href{https://github.com/HoTT/Coq-HoTT/blob/56629c19010a7d7155b22aad4525f9b2ac1bd584/theories/Algebra/AbSES/BaerSum.v\#L214}{\(^\coderef\)}
  Combining this with \cite[Lemma~2.6]{BCFR}\href{https://github.com/HoTT/Coq-HoTT/blob/56629c19010a7d7155b22aad4525f9b2ac1bd584/theories/Homotopy/HSpace/Core.v\#L138}{\(^\coderef\)}, we deduce that \(\vAbSES\) is a bifunctor into pointed types.
  This does not play a role in the rest of this paper, however.
\end{remark}

\subsection{The Baer Sum} \label{ssec:baer-sum}

The \emph{Baer sum} is a binary operation on \(\Ext^1(B,A)\) which makes it into an abelian group.
Given two extensions \(E, F : \Ext^1(B,A)\) their Baer sum is defined as
\[ E + F :\jeq \Delta^* \nabla_* (E \oplus F) \]
where \(E \oplus F\) is the point-wise direct sum, \(\nabla(a,b) :\jeq a_0 + a_1 : A \oplus A \to A\) is the codiagonal map, and \(\Delta(b) :\jeq (b,b) : B \to B \oplus B\) is the diagonal map.

Together with Dan Christensen and Jacob Ender, we have implemented the Baer sum in \href{https://github.com/HoTT/Coq-HoTT/blob/3062f0a152dca5b58323bffb9fceab4188d96bb1/theories/Algebra/AbSES/BaerSum.v}{\verb~Algebra.AbSES.BaerSum~}.
We define this operation on the level of short exact sequences and then descend the operation to the set \(\vExt^1\) by truncation-recursion.\href{https://github.com/HoTT/Coq-HoTT/blob/3062f0a152dca5b58323bffb9fceab4188d96bb1/theories/Algebra/AbSES/BaerSum.v\#L11}{\(^\coderef\)}

\begin{lstlisting}
Definition abses_baer_sum `{Univalence} {B A : AbGroup}
  : AbSES B A -> ABSES B A AbSES B A := fun E F =>
      abses_pullback ab_diagonal
        (abses_pushout ab_codiagonal (abses_direct_sum E F)).
    
Definition baer_sum `{Univalence} {B A : AbGroup}
  : Ext B A -> Ext B A -> Ext B A.
Proof.
  intros E F; strip_truncations.
  exact (tr (abses_baer_sum E F)).
Defined.
\end{lstlisting}
Above, the \verb+strip_truncations+ tactic is a helper for doing truncation-recursion; it lets us assume that both \(E\) and \(F\) are elements of \(\vAbSES(B,A)\) in order to map into the set \(\vExt^1(B,A)\).
We then simply form the Baer sum of \(E\) and \(F\) on the level of short exact sequences before applying \verb+tr+ to the result.

The formalisation that the Baer sum makes \(\vExt^1(B,A)\) into an abelian group closely follows the ``second proof'' of \cite[Theorem~III.2.1]{Mac63}.

\begin{theorem}
  The set \(\vExt^1(B,A)\) is an abelian group under the Baer sum operation.\href{https://github.com/HoTT/Coq-HoTT/blob/3062f0a152dca5b58323bffb9fceab4188d96bb1/theories/Algebra/AbSES/Ext.v\#L50}{\(^\coderef\)}
\end{theorem}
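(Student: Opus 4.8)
The plan is to push all the work down to the level of short exact sequences. Since $\vExt^1(B,A)$ is a set and the Baer sum is defined on it by truncation-recursion from \verb+abses_baer_sum+, each group axiom reduces, after stripping truncations, to producing for $E,F,G : \vAbSES(B,A)$ a path in $\vAbSES(B,A)$ between the relevant iterated Baer sums of short exact sequences (recall
\[ E + F \ :\jeq\ \Delta_B^*\,\nabla_{A*}(E \oplus F) \]
at this level). By \cref{lem:abses-path-data} such a path is the same as \emph{path data}: a map between the two \verb+middle+ groups commuting with the inclusions and projections, which the short five lemma automatically upgrades to an isomorphism. So throughout we construct these paths from the functoriality of pullback and pushout recorded above — they respect composition and homotopies of homomorphisms and preserve identities — together with \cref{prop:pullback-characterisation}, its pushout dual, and the corollary identifying $f_*(E)$ with $g^*(F)$ in a commuting square.

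The backbone is a small calculus of compatibilities, each proved at the level of short exact sequences via path data: (i) pullback and pushout commute, $f_*\,g^*(E) = g^*\,f_*(E)$, which is essentially the bifunctoriality of $\vAbSES$ established earlier; (ii) direct sum intertwines with both operations, $(f \oplus f')_*(E \oplus F) = f_*(E) \oplus f'_*(F)$ and dually for pullback; (iii) $-\oplus-$ on short exact sequences is associative, commutative, and unital with the trivial sequence $A \to A \oplus B \to B$ as unit, all deduced from the corresponding facts about biproducts of abelian groups; and (iv) naturality and symmetry of the (co)diagonal, i.e.\ $\nabla_A \circ (f \oplus f) = f \circ \nabla_{A'}$, $\Delta_B \circ g = (g \oplus g) \circ \Delta_{B'}$, $\nabla_A \circ \sigma = \nabla_A$ and $\sigma \circ \Delta_B = \Delta_B$ for the swap $\sigma$, plus the evident triple-sum versions. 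Granting this toolkit, commutativity follows by rewriting $F \oplus E$ as a swap of $E \oplus F$ and pushing the swap through $\nabla_{A*}$ and past $\Delta_B^*$ using (i) and (iv). The unit law follows by expanding $\Delta_B^*\nabla_{A*}(E \oplus 0)$ with (ii): the trivial summand collapses, and \cref{prop:pullback-characterisation} together with its pushout dual identifies the result with $E$. For inverses one checks that $(-\id_A)_*(E)$ (equivalently $(-\id_B)^*(E)$) is the negative of $E$: by (ii), $E \oplus (-\id_A)_*(E) = (\id_A \oplus (-\id_A))_*(E \oplus E)$, so $E + (-\id_A)_*(E) = \Delta_B^*\bigl(n_*(E \oplus E)\bigr)$ where $n = \nabla_A \circ (\id_A \oplus (-\id_A)) : A \oplus A \to A$ is the difference map; commuting $\Delta_B^*$ inward and using that $n \circ \Delta_A = 0$ makes this a trivial sequence, by the earlier lemma that a trivial sequence stays trivial under pullback (dually, pushout).

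The main obstacle is associativity. Expanding $(E+F)+G$ and $E+(F+G)$ with (ii) turns each into a pullback of $(E \oplus F) \oplus G$, respectively $E \oplus (F \oplus G)$, along an iterated diagonal $B \to B \oplus B \oplus B$ and a pushout along an iterated codiagonal $A \oplus A \oplus A \to A$. The two sides then differ only by the associativity isomorphisms of triple direct sums on the $A$-side, the $B$-side, and the short-exact-sequence level, and the real content is verifying that these isomorphisms intertwine the two bracketings of the iterated (co)diagonals so that everything cancels — exactly the kind of coherence bookkeeping that the \verb+WildCat+ path-data framework is designed to make tractable, since each such identity reduces to a homotopy of group homomorphisms into or out of a biproduct, checkable on generators. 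With the toolkit in place the remaining verifications are routine, and well-definedness on $\vExt^1(B,A)$ is automatic from truncation-recursion into a set; the formalisation follows the second proof of \cite[Theorem~III.2.1]{Mac63} closely.
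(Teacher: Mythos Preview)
Your approach is correct and essentially the same as the paper's: the paper does not spell out a proof of this theorem but simply states that the formalisation follows the ``second proof'' of \cite[Theorem~III.2.1]{Mac63}, and remarks that the argument reduces to chaining together equations once bifunctoriality of \(\vExt^1\) and its interaction with direct sums are in hand. Your toolkit (i)--(iv) and the subsequent derivations are precisely a fleshed-out version of that strategy, and your closing sentence even cites the same source; if anything, you give considerably more detail than the paper itself.
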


The proof can be done entirely by chaining together equations once the bifunctoriality of \(\vExt^1\) has been established along with its interaction with direct sums.
To illustrate this, we prove that pushouts respect the Baer sum:

\begin{proposition}
  Let \(\alpha : A \to A'\) be a homomorphism of abelian groups.
  For any abelian group \(B\), pushout defines a group homomorphism
  \( \alpha_* : \vExt^1(B,A) \to \vExt^1(B,A'). \)\href{https://github.com/HoTT/Coq-HoTT/blob/3062f0a152dca5b58323bffb9fceab4188d96bb1/theories/Algebra/AbSES/BaerSum.v\#L236}{\(^\coderef\)}
\end{proposition}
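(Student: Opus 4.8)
The plan is to verify directly that \(\alpha_*\) respects the Baer sum. Recall from the preceding discussion that pushout along \(\alpha\) is a \(1\)-functor \(\vAbSES(B,-) \to \vAbSES(B,-)\) which descends through the set-truncation to the function \(\alpha_* : \vExt^1(B,A) \to \vExt^1(B,A')\) in the statement. Since a function between abelian groups which preserves the binary operation is automatically a group homomorphism, it suffices to prove \(\alpha_*(E + F) = \alpha_*(E) + \alpha_*(F)\) for all \(E, F : \vExt^1(B,A)\). As \(\vExt^1(B,A')\) is a set we may apply \verb~strip_truncations~ and assume \(E, F : \vAbSES(B,A)\); it then suffices to produce a path in \(\vAbSES(B,A')\) from \(\alpha_*\bigl(\Delta_B^*\,(\nabla_A)_*(E \oplus F)\bigr)\) to \(\Delta_B^*\,(\nabla_{A'})_*(\alpha_* E \oplus \alpha_* F)\), where \(\nabla_A : A \oplus A \to A\) and \(\nabla_{A'} : A' \oplus A' \to A'\) are the codiagonals and \(\Delta_B : B \to B \oplus B\) the diagonal.

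Such a path is built by concatenating four equalities, each established at the level of short exact sequences using path data. First, \(\alpha_* \circ \Delta_B^* = \Delta_B^* \circ \alpha_*\) — an instance of the bifunctor law for \(\vAbSES\) proved above — lets us move the pushout past the pullback. Second, functoriality of pushout gives \(\alpha_* \circ (\nabla_A)_* = (\alpha \circ \nabla_A)_*\). Third, the homomorphisms \(\alpha \circ \nabla_A\) and \(\nabla_{A'} \circ (\alpha \oplus \alpha)\) from \(A \oplus A\) to \(A'\) agree (naturality of the codiagonal, using that \(\alpha\) is additive), so — since pushout respects homotopic maps — \((\alpha \circ \nabla_A)_* = (\nabla_{A'})_* \circ (\alpha \oplus \alpha)_*\). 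Fourth, pushout commutes with direct sums: \((\alpha \oplus \alpha)_*(E \oplus F) = \alpha_* E \oplus \alpha_* F\). Chaining these yields the desired path, which maps under set-truncation to \(\alpha_*(E+F) = \alpha_*(E) + \alpha_*(F)\).

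The main obstacle is bookkeeping: the ingredients above are coherences among specific constructed paths rather than bare propositional equalities, so each must be verified on \(\vAbSES\), where the \verb~WildCat~ framework of path data makes the manipulations tractable. The only ingredient not already recorded in the text is that pushout commutes with direct sums; this follows from the universal property of pushouts together with the fact that a direct sum of pushout squares is again a pushout square. Once all four equalities are available at the level of short exact sequences, descending to \(\vExt^1\) costs nothing, since it is a set and any higher coherence collapses. An entirely analogous argument — moving a pullback \(\beta^*\) past the \((\nabla_A)_*\) and using that pullback commutes with direct sums and with pushout — shows that pullback along \(\beta : B' \to B\) likewise defines a homomorphism \(\vExt^1(B,A) \to \vExt^1(B',A)\).
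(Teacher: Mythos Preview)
Your proof is correct and follows essentially the same route as the paper: commute \(\alpha_*\) past \(\Delta^*\) via the bifunctor law, rewrite \(\alpha_*\nabla_* = \nabla_*(\alpha\oplus\alpha)_*\) using naturality of the codiagonal and functoriality of pushout, and finish with \((\alpha\oplus\alpha)_*(E\oplus F) = \alpha_*E \oplus \alpha_*F\). The paper compresses your steps~2 and~3 into a single equality and refers to the direct-sum compatibility simply as ``naturality of \(\oplus\),'' but the content is the same.
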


\begin{proof}
  Using bifunctoriality of \(\vExt^1\) and naturality of \(\oplus\), we have:
  \begin{align*}
    \alpha_*(E + F) &= \Delta^* (\alpha_* \nabla_* (E \oplus F))
    = \Delta^* (\nabla_* (\alpha_* \oplus \alpha_*)_* (E \oplus F)) \\
                    &= \Delta^* (\nabla_* (\alpha_* E \oplus \alpha_* F))
                      \jeq \alpha_*E + \alpha_* F . \qedhere
  \end{align*}
\end{proof}
Similarly, pullback defines a group homomorphism as well.\href{https://github.com/HoTT/Coq-HoTT/blob/56629c19010a7d7155b22aad4525f9b2ac1bd584/theories/Algebra/AbSES/BaerSum.v\#L251}{\(^\coderef\)}
These results make \(\vExt^1\) into a bifunctor valued in abelian groups.\href{https://github.com/HoTT/Coq-HoTT/blob/56427d24c185e19deae6cf8af0ad80924276ae3f/theories/Algebra/AbSES/Ext.v\#L83}{\(^\coderef\)}

\section{The Pullback Fibre Sequence} \label{sec:fibre-sequence}

\newcommand{\cxfib}{c}

The main goal of this section is to explain and prove the following mathematical result, and to discuss its formalisation%
\href{https://github.com/HoTT/Coq-HoTT/blob/56629c19010a7d7155b22aad4525f9b2ac1bd584/theories/Algebra/AbSES/PullbackFiberSequence.v}{\(^\coderef\)}
along with some applications.

\begin{theorem} \label{thm:pullback-fibre-sequence}
  Let \(A \xra{i} E \xra{p} B\) be a short exact sequence of abelian groups.
  For any abelian group \(G\), pullback yields a fibre sequence:
  \( \vAbSES(B, G) \xra{p^*} \vAbSES(E, G) \xra{i^*} \vAbSES(A, G). \)\href{https://github.com/HoTT/Coq-HoTT/blob/56629c19010a7d7155b22aad4525f9b2ac1bd584/theories/Algebra/AbSES/PullbackFiberSequence.v\#L390}{\(^\coderef\)}
\end{theorem}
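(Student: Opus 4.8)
The plan is to exhibit the fibre sequence by proving that $p^*$, together with a suitable null-homotopy, realises $\vAbSES(B,G)$ as the fibre $\fib{i^*}(\pt)$ of $i^*$ over the trivial extension. First I would record that $p^*$ and $i^*$ are pointed (pulling back a trivial extension is trivial, by an earlier lemma) and assemble the canonical null-homotopy $\cxfib : i^* \circ p^* \sim \mathrm{const}_{\pt}$: functoriality of pullback under composition gives $i^* \circ p^* \sim (p\circ i)^*$, exactness of $A \xra{i} E \xra{p} B$ gives $p\circ i \sim 0$ so that $(p\circ i)^* \sim 0^*$, and for any $F$ the extension $0^*F$ is trivial since its middle group is $\{(a,x) : \mathrm{proj}_F(x) = 0\} \cong A \oplus G$. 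Chaining these — and dragging along the coherences established earlier for how pullback interacts with composition and homotopies — produces $\cxfib$, hence a canonical map $\psi : \vAbSES(B,G) \to \fib{i^*}(\pt)$ which lies over $\vAbSES(E,G)$. Everything then reduces to showing that $\psi$ is an equivalence.

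Since $\psi$ lies over $\vAbSES(E,G)$, it is an equivalence iff it is a fibrewise equivalence. The fibre of the projection $\fib{i^*}(\pt) \to \vAbSES(E,G)$ over an extension $X$ is canonically the path type $(i^*X = \pt)$, and on it $\psi$ induces the map
\[ \Phi_X : \fib{p^*}(X) \longrightarrow (i^*X = \pt), \qquad (F,\alpha) \longmapsto (i^*\alpha)^{-1} \cdot \cxfib_F , \]
where $\fib{p^*}(X) \jeq \sum_{F : \vAbSES(B,G)}(p^*F = X)$ and $i^*\alpha$ is the action of $i^*$ on the path $\alpha$. So the goal becomes: each $\Phi_X$ is an equivalence. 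Here I would unfold both sides via \cref{lem:abses-path-data}: an element of $\fib{p^*}(X)$ is an extension $G \to F \to B$ together with an isomorphism of middle groups $p^*F \cong X$ respecting inclusions and projections, while $(i^*X = \pt)$ unfolds to a splitting of $i^*X$ — concretely a homomorphism $\sigma$ from $A$ to the middle of $X$ with $\mathrm{proj}_X \circ \sigma = i$, compatible with the inclusion of $G$.

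The inverse to $\Phi_X$ is then built by hand: from a splitting $\sigma$ of $i^*X$, form the extension $F \jeq \mathrm{mid}(X)/\sigma(A)$; exactness of $A \xra{i} E \xra{p} B$ plus the section condition on $\sigma$ makes $G \to F \to B$ short exact, and $\mathrm{mid}(X)$ is identified with the pullback $p^*F$ via $x \mapsto (\mathrm{proj}_X(x),[x])$, giving $\alpha : p^*F = X$. Conversely, from $(F,\alpha)$ one transports the canonical splitting of $i^*p^*F = (p\circ i)^*F = 0^*F$ along $\alpha$. The hard part — where the ``considerable amounts of coherence'' live — will be verifying that these two constructions are mutually inverse and that the forward one computes to $\Phi_X$: one must chase the inclusion- and projection-compatibility data through the quotient and the pullback, and check that concatenating $(i^*\alpha)^{-1}$ with the trivialisation of $0^*F$ gives back the splitting one started from. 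Working throughout with path data rather than literal identity types, and reusing the pullback coherences established earlier, is what I expect keeps this bookkeeping tractable; everything else (the reduction to a fibrewise statement, and the null-homotopy coming from $p\circ i = 0$) is comparatively routine.
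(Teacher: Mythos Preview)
Your proposal is correct and follows essentially the same route as the paper: both construct the null-homotopy from $p \circ i = 0$, then prove the induced map into $\fib{i^*}$ is an equivalence via the same key quotient construction (your $\mathrm{mid}(X)/\sigma(A)$ is the paper's $F/A$), working throughout with path data rather than literal identity types. The only difference is organisational---you reduce to a fibrewise equivalence over $\vAbSES(E,G)$ and build an explicit inverse to each $\Phi_X$, whereas the paper directly shows the fibres of $c$ are contractible by producing a section $(F,p)\mapsto F/A$ and contracting each fibre to its value under this section---but the underlying constructions and coherence checks coincide.
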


In~\cite[Proposition~2.3.2]{CF} we give a different proof of this result via an equivalence between \(\vAbSES(B,A)\) and pointed maps between Eilenberg--Mac Lane spaces.
However, this different proof seems to only work over \(\Zb\) whereas our proof below works for a general ring (though it has only been formalised for \(\Zb\)).

A sequence of pointed maps \(F \xra{i} E \xra{p} B\) is a \textbf{fibre sequence} if \(p \circ i\) is pointed-homotopic to the constant map, and the induced map \(F \to \fib{p}\) is an equivalence.
Any fibre sequence induces a long exact sequence of homotopy groups~\cite[Theorem~8.4.6]{hottbook}:
\[ \cdots \to \pi_n(F) \to \pi_n(E) \to \pi_n(B) \to
  \cdots \to \pi_0(F) \to \pi_0(E) \to \pi_0(B). \]

In the situation of our theorem, it is immediate from functoriality and exactness of \(E\) that \(i^* \circ p^*\) is constant.
Therefore our goal is to show that the induced map \( \cxfib : \vAbSES(B,G) \to \fib{i^*}\) is an equivalence.\footnote{The map \(\cxfib\) is called \verb~cxfib~ in the code.}
We will do this by first constructing a section of \(c\), and then a contraction of the fibres of \(c\) to the values of this section.
A key part of the formalisation is to work with path data instead of actual paths, since the former has better computational properties.
We will simply use \(E = F\) to denote path data, and refer to it as such, in this section.

\begin{lemma}
  Let \(G \to F \to E\) be a short exact sequence.
  Given path data \(p : i^*(F) = \pt \),
  we construct a short exact sequence
  \( G \to F/A \to B \).%
  \href{https://github.com/HoTT/Coq-HoTT/blob/56629c19010a7d7155b22aad4525f9b2ac1bd584/theories/Algebra/AbSES/PullbackFiberSequence.v\#L62}{\(^\coderef\)}
\end{lemma}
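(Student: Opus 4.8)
The plan is to extract a section of $i^*(F)$ from the given path data, and then realise $F/A$ as an explicit cokernel. Write $\iota : G \to F$ and $\pi : F \to E$ for the structure maps of the given short exact sequence. The path data $p : i^*(F) = \pt$ provides in particular an isomorphism $\phi : i^*(F) \to G \oplus A$ compatible with the inclusions and projections; composing its inverse with the summand inclusion $A \to G \oplus A$ gives a section $\sigma : A \to i^*(F)$ of the projection of $i^*(F)$. Composing $\sigma$ with the canonical map $i^*(F) \to F$ coming from the pullback square yields a homomorphism $j : A \to F$, and since that square commutes and $\sigma$ is a section one checks $\pi \circ j = i$. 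As $i$ is an embedding, so is $j$.

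Now let $F/A$ be the cokernel of $j$, with quotient map $q : F \to F/A$, and define the inclusion $G \to F/A$ to be $q \circ \iota$. Since $\pi \circ j = i$ and $p \circ i = 0$, the homomorphism $p \circ \pi : F \to B$ vanishes on the image of $j$, hence factors through $q$; this defines the projection $\bar\pi : F/A \to B$.

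It remains to check that $G \to F/A \to B$ is short exact. The projection $\bar\pi$ is a surjection, since $p \circ \pi$ is a surjection factoring through the surjection $q$. For the other two conditions I chase elements. If $\iota(g)$ lies in the image of $j$, then applying $\pi$ and using $\pi \circ \iota = 0$, $\pi \circ j = i$, and injectivity of $i$ and of $\iota$ forces $g = 0$; hence the inclusion $G \to F/A$ is an embedding. If $x : F$ satisfies $\bar\pi(q(x)) = 0$, then $\pi(x) \in \ker p = \operatorname{im} i$, say $\pi(x) = i(a)$, so that $x - j(a) \in \ker \pi = \operatorname{im} \iota$; writing $x - j(a) = \iota(g)$, we get $q(x) = q(\iota(g))$, which lies in the image of $G \to F/A$. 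This gives exactness in the middle.

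I expect the bulk of the effort to be bookkeeping rather than a single hard step: transporting the given path data through the pullback to produce $j$ together with its defining property $\pi \circ j = i$, threading the universal property of the cokernel through the structure maps, and discharging the three exactness conditions — which, being statements about kernels and images, call for the usual care with propositional truncations in a formal treatment. As elsewhere in this section, keeping everything phrased in terms of path data rather than genuine paths should keep the coherences manageable.
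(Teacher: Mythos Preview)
Your proposal is correct and follows essentially the same approach as the paper: extract from the path data a section of $i^*(F)\to A$, compose with the pullback map to obtain $j:A\to F$, take the cokernel, and verify the three exactness conditions by element chases. The only cosmetic difference is in the injectivity check: you argue via $\pi\circ j=i$ and injectivity of $i$, while the paper phrases the same step as injectivity of the pullback map $i^*(F)\to F$ together with the identification $i^*(F)\cong G\oplus A$; these are the same observation.
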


\begin{proof}
  The path data \(p\) means that the sequence \(i^*(F)\) splits.
  Thus we can form the cokernel \(F/A\) as in the diagram:
  \[ \begin{tikzcd}
      && i^*(F) \ar[rr] \dar \ar[drr, phantom, "\lrcorner" at start] && \ar[dll, dashed] A \dar["i"] \\
      G \ar[urr, bend left] \ar[rr, "j"] \ar[drr, dotted, bend right] && F \ar[rr] \dar[dashed] && E \dar["p"] \\
      && F/A \ar[rr, dotted] && B \period
    \end{tikzcd} \]
  The two maps \(G \to F/A \to B \) are given by composition and the universal property of the cokernel, respectively.
  It is clear that this forms a complex and that the second map is an epimorphism, since it factors one.
  To see that the map \(G \to F/A\) is an injection, suppose \(g : G\) is sent to \(0 : F/A\).
  Then \(j(g)\) is in the image of some \(a : A\) by \(A \to F\).
  But the map \(i^*(F) \to F\) is an injection, being the pullback of one, and so using the path data we get an equality \((g,0) = (0,a)\) in \(G \oplus A\).
  Of course, this implies that \(g = 0\), as required.

  Exactness of \(G \to F/A \to B\) follows from a straightforward diagram chase.
\end{proof}

The diagram above exhibits \(F\) as the pullback of \(F/A\) along \(p^*\), yielding:

\begin{lemma}
  We have path data \( q : p^*(F/A) = F\).\href{https://github.com/HoTT/Coq-HoTT/blob/56629c19010a7d7155b22aad4525f9b2ac1bd584/theories/Algebra/AbSES/PullbackFiberSequence.v\#L135}{\(^\coderef\)}
\end{lemma}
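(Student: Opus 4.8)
The plan is to observe that the diagram constructed in the previous lemma already exhibits \(F\) as a pullback, so that \cref{prop:pullback-characterisation} applies almost verbatim. Concretely, I would first extract from that diagram the morphism of short exact sequences
\[
  \begin{tikzcd}
    G \dar["\id"] \rar["j"] & F \dar \rar & E \dar["p"] \\
    G \rar & F/A \rar & B \comma
  \end{tikzcd}
\]
whose rows are the short exact sequences \(G \to F \to E\) and \(G \to F/A \to B\) and whose middle vertical map is the cokernel projection \(F \to F/A\). The left square commutes because the inclusion \(G \to F/A\) was \emph{defined} as the composite \(G \xra{j} F \to F/A\), and the right square commutes because the projection \(F/A \to B\) was defined, via the universal property of the cokernel, precisely as the unique map with \(F \to F/A \to B = F \to E \xra{p} B\).

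Since the leftmost vertical map is the identity on \(G\), \cref{prop:pullback-characterisation} now produces a path identifying the top row \(G \to F \to E\) with the pullback \(p^*(F/A)\) of the bottom row along \(p\); reversing this path data yields the desired \(q : p^*(F/A) = F\). Recall that, by \cref{lem:abses-path-data}, such a path is the same data as an isomorphism between the middle groups commuting with the inclusions and projections — which is exactly what \cref{prop:pullback-characterisation} hands us, namely (the reverse of) the map \(F \to p^*(F/A)\) obtained from the universal property of the pullback applied to the right square above.

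Mathematically there is little to do beyond checking that the two squares commute on the nose, the second one requiring a short unwinding of the cokernel's universal property. The real care, as flagged earlier in this section, is on the formalisation side: \(q\) should be recorded \emph{as path data} rather than as a bare identification of short exact sequences, because the subsequent lemmas must establish coherences relating \(q\) to the path datum \(p\) and to the maps constructed in the previous lemma, and those obligations are only manageable when everything is phrased through path data with good computational behaviour. So I expect the genuine obstacle to lie not in this lemma but in setting up \(q\) in a form that makes the downstream coherences tractable.
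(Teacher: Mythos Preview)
Your proposal is correct and matches the paper's approach: the paper simply remarks that the diagram from the preceding lemma exhibits \(F\) as the pullback of \(F/A\) along \(p\), which is precisely your application of \cref{prop:pullback-characterisation} to the evident morphism of short exact sequences with identity on \(G\). Your additional comments about recording \(q\) as path data for the sake of the downstream coherences are also in line with the paper's emphasis.
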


Thus we have given a preimage \(F/A\) of \(F\) under \(p^*\).
To show that the fibre of \(\cxfib\) is inhabited we will show that \(\cxfib(F/A) = (F, p)\), which is a path in \(\fib{i^*}\).
We express all of this in terms of path data, and such a path in \(\fib{i^*}\) then corresponds to path data \(q : p^*(F/A) = F \) which makes the following triangle commute:\href{https://github.com/HoTT/Coq-HoTT/blob/56629c19010a7d7155b22aad4525f9b2ac1bd584/theories/Algebra/AbSES/PullbackFiberSequence.v\#L233}{\(^\coderef\)}
\begin{equation} \label{eq:cxfib-hfiber-path-data}
  \begin{tikzcd}
 i^*p^*(F/A) \ar[dr] \ar[rr, "i^*(q)"] && i^*(F) \ar[dl, "p"] \\
    & G \oplus A
  \end{tikzcd}
\end{equation}
where the rightmost map comes from \(i^*p^*\) being trivial.
The key reason we have formulated things in terms of path data is so that the maps in the triangle above simply compute, because they have all been concretely constructed.

In the following, \(\cxfib\) refers to the map which lands in \(\fib{i^*}\) expressed in terms of path data.\href{https://github.com/HoTT/Coq-HoTT/blob/56629c19010a7d7155b22aad4525f9b2ac1bd584/theories/Algebra/AbSES/PullbackFiberSequence.v\#L150}{\(^\coderef\)}

\begin{lemma}
  We have \(q : \cxfib(F/A) = (F, p)\) in \(\fib{i^*}\).\href{https://github.com/HoTT/Coq-HoTT/blob/56629c19010a7d7155b22aad4525f9b2ac1bd584/theories/Algebra/AbSES/PullbackFiberSequence.v\#L255}{\(^\coderef\)}
\end{lemma}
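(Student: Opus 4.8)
The plan is to assemble the required path in \(\fib{i^*}\) from the path data \(q : p^*(F/A) = F\) produced in the previous lemma. An element of \(\fib{i^*}\) (in its path-data form) is an extension in \(\vAbSES(E,G)\) together with path data trivialising its image under \(i^*\); since \(\cxfib(F/A)\) and \((F,p)\) have first components \(p^*(F/A)\) and \(F\) respectively, a path between them consists of the path data \(q\) in the first coordinate together with a proof that \(q\) carries the canonical trivialisation of \(i^*p^*(F/A)\) to the trivialisation \(p\) of \(i^*(F)\). Since equality of path data is witnessed by a homotopy of the underlying maps, this second condition is exactly the commutativity of triangle~\eqref{eq:cxfib-hfiber-path-data}. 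Thus the proof reduces to verifying that triangle.

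To verify it, I would unfold the concrete constructions in play. The path data \(q\) arises from the pullback square exhibiting \(F\) as \(p^*(F/A)\) — obtained via the universal property of the pullback together with \cref{prop:pullback-characterisation} — and \(F/A\) itself is the explicit cokernel built from the splitting of \(i^*(F)\) encoded by \(p\). Likewise \(i^*\), the pullback \(p^*\), and the canonical trivialisation of \(i^* p^*(F/A) = (p \circ i)^*(F/A)\) arising from \(p \circ i = 0\) are all given by explicit formulas. Consequently both composites around triangle~\eqref{eq:cxfib-hfiber-path-data} are concrete homomorphisms \(i^* p^*(F/A) \to G \oplus A\), and checking that they agree is a finite computation: one may, for instance, use that a homomorphism into the biproduct \(G \oplus A\) is determined by its two components and inspect each, or chase elements directly through the explicit formulas. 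Because we work with path data rather than transported equalities, the maps in the triangle compute definitionally, so no transport bookkeeping intervenes.

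The main obstacle is precisely this coherence check: tracing through the several layers of construction — the cokernel \(F/A\), the pullback square exhibiting \(F\), and the induced map \(i^*(q)\) — to confirm that the composite \(p \circ i^*(q)\) reduces to the canonical trivialisation of \(i^*p^*(F/A)\). The difficulty is organisational rather than conceptual, and this is exactly where phrasing everything in terms of path data pays off, since it turns what would otherwise be a delicate argument about transport into a routine diagram chase over concretely defined maps.
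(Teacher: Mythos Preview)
Your proposal is correct and follows essentially the same approach as the paper: reduce to the commutativity of triangle~\eqref{eq:cxfib-hfiber-path-data}, then exploit that the path-data maps compute definitionally to turn the coherence into a concrete verification.

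The one tactical difference worth noting is the direction in which the triangle is checked. You propose comparing two maps \(i^*p^*(F/A) \to G \oplus A\) and using that maps into a biproduct are determined by their components. The paper instead inverts the isomorphisms and compares maps \(G \oplus A \to i^*p^*(F/A)\), so that the domain is a biproduct (check on the two inclusions) and the codomain is a pullback (check after each projection). Both decompositions are valid; the paper simply records that the flipped orientation was the one that reduced most cleanly in the formalisation. Conceptually there is no gap in your plan.
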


\begin{proof}
  The previous lemma already yields path data \(q : p^*(F/A) = F\), thus it remains to show that the triangle in \cref{eq:cxfib-hfiber-path-data} commutes.
  The way the maps have been constructed, it's easiest to show this after flipping the triangle so that it starts at \(G \oplus A\) and ends at \(i^*p^*(F/A)\).
  (This is fine since all the maps are isomorphisms.)
  Thus we are comparing two maps out of a biproduct into a pullback.
  To check whether they are equal, we can check it on each inclusion of the biproduct and after projecting out of the pullback.
  In each of these cases one obtains diagrams which commute, but checking this is somewhat involved.
  Fortunately, by our having carefully crafted the path data involved, the maps simply compute and Coq is able to reduce the goal to a simple computation.
\end{proof}

Combining the three previous lemmas, we get a section of \(\cxfib : \vAbSES(B,G) \to \fib{i^*}\).
To conclude that \(\cxfib\) is an equivalence, we contract each fibre over some \((F,p)\) to \((F/A, q)\).

\begin{lemma}
  Suppose \(G \to Y \to B\) is a short exact sequence, and let \(q' : \cxfib(Y) = (F,p)\) in \(\fib{i^*}\).
  Then \((F/A, q) = (Y,q')\) in the fibre of \(\cxfib\) over \((F,p)\).\href{https://github.com/HoTT/Coq-HoTT/blob/56629c19010a7d7155b22aad4525f9b2ac1bd584/theories/Algebra/AbSES/PullbackFiberSequence.v\#L362}{\(^\coderef\)}
\end{lemma}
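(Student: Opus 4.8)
The plan is to exhibit an explicit path in the fibre of \(\cxfib\) over \((F,p)\) from \((Y,q')\) to \((F/A,q)\). Since \(\vAbSES(A,G)\) is a \(1\)-type, so is \(\fib{i^*}\), and hence the type \(\cxfib(Y) = (F,p)\) is a set; therefore such a path consists of path data \(\theta : F/A = Y\) in \(\vAbSES(B,G)\) together with a \emph{proof}---a mere proposition, since it lives in a set---that transporting \(q\) along \(\theta\) yields \(q'\). Unwinding the transport and the definition of \(\cxfib\), this coherence reduces to the single equation \(q = p^*(\theta) \cdot \tilde q\) of path data \(p^*(F/A) = F\), where \(\tilde q : p^*(Y) = F\) denotes the path data underlying \(q'\).

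I would construct \(\theta\) as follows. Applying the cokernel construction of the first lemma of this section to \(\cxfib(Y) = (p^*(Y), \mathrm{can})\)---where \(\mathrm{can}\) is the canonical splitting of \(i^*p^*(Y)\), which holds because \(p \circ i\) is zero---produces a short exact sequence \((p^*Y)/A\) over \(B\). The triangle coherence carried by \(q'\) says exactly that \(\tilde q\) matches \(\mathrm{can}\) with the splitting \(p\) of \(i^*(F)\), so \(\tilde q\) descends to path data \((p^*Y)/A = F/A\). On the other hand, the sequence \(A \to p^*(Y) \to Y\) (with \(a \mapsto (i(a),0)\) and projection onto the \(Y\)-coordinate) is short exact, so the induced map \((p^*Y)/A \to Y\) respects inclusions and projections and is therefore path data \((p^*Y)/A = Y\) by \cref{lem:abses-path-data}; alternatively, \(\theta\) is the descent of \(F \xra{\tilde q^{-1}} p^*(Y) \to Y\) along the quotient \(F \to F/A\). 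Composing gives \(\theta : F/A = Y\).

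Finally one checks \(q = p^*(\theta) \cdot \tilde q\). As this is an equation between path data, it is a proposition, and---as in the preceding lemmas---because everything has been built concretely the underlying maps simply compute: unfolding \(\theta\), the left side is the pullback-square identification of the second lemma and the right side reassembles the same isomorphism from \(\tilde q\) together with the cokernel and splitting data, and the two agree after a short computation. The main obstacle is the bookkeeping in the middle step---threading the triangle coherence of \(q'\) through the cokernel construction so that the identification \((p^*Y)/A = F/A\), and hence \(\theta\), is the correct one, and then confirming that the concrete maps reduce as intended in the final equation. This is the same style of coherence management used throughout the section; together with the section assembled from the three previous lemmas, it shows that every fibre of \(\cxfib\) is contractible, so that \(\cxfib\) is an equivalence and the sequence of the theorem is a fibre sequence.
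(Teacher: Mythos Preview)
Your proposal is correct and follows essentially the same approach as the paper: construct path data \(\theta : F/A = Y\) by descending the composite \(F \xra{\tilde q^{-1}} p^*(Y) \to Y\) to the cokernel, then verify the remaining coherence, noting that the final triangle condition in \(\vAbSES(A,G)\) is automatic by truncation. Your factorisation through the auxiliary object \((p^*Y)/A\) is a mild reorganisation of the paper's direct appeal to the universal property of the cokernel \(F/A\) (which you also record as your ``alternative''), and your reduction of the lifting condition to the single path-data equation \(q = p^*(\theta)\cdot \tilde q\) matches the paper's observation that the residual coherence among paths in \(\vAbSES(A,G)\) is trivially satisfied since that type is a \(1\)-type.
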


\begin{proof}
  Under our assumptions, we have the composite map \(\phi : G \oplus A \to i^*p^*(Y) \to p^*(Y)\)\href{https://github.com/HoTT/Coq-HoTT/blob/56629c19010a7d7155b22aad4525f9b2ac1bd584/theories/Algebra/AbSES/PullbackFiberSequence.v\#L285}{\(^\coderef\)} which by a diagram chase can be seen to be the inclusion \(G \to p^*(Y)\) on one component, and \((0,p) : A \to p^*(Y)\) on the other.\href{https://github.com/HoTT/Coq-HoTT/blob/56629c19010a7d7155b22aad4525f9b2ac1bd584/theories/Algebra/AbSES/PullbackFiberSequence.v\#L325}{\(^\coderef\)}.
  Consequently, the composite \(\verb~pr~_1 \circ \phi \circ \verb~in~_A : A \to Y \) is trivial.
  By the universal property of the cokernel, we get an induced map \(F/A \to Y\).
  Once again, by our careful construction of all the maps involved, it is straightforward to simply compute that this map defines path data \(F/A = Y\) and moreover that this path lifts to a path in the fibre of \(\cxfib\).
  There is a coherence between three paths in \(\vAbSES(A,G)\) which is trivially satisfied, since \(\vAbSES(A,G)\) is a \(1\)-type.
\end{proof}

The final lemma implies that the fibres of \(\cxfib\) are contractible, which means that \(c\) is an equivalence and concludes the proof of \cref{thm:pullback-fibre-sequence}.
We now turn our attention to two applications of this theorem.
The first application requires a lemma.

\begin{lemma}
  Let \(g : B' \to B\) be a homomorphism of abelian groups.
  For any \(A\), the following diagram commutes, where the vertical isomorphisms are all given by \cref{prop:retakh}:%
  \href{https://github.com/jarlg/Yoneda-Ext/blob/0d8bfe8e168bbdf325e805d7268e826e889189f0/Lemmas.v\#L104}{\(^\coderef\)}
  \begin{equation}\label{eq:loops-hom}
    \begin{tikzcd}
      \Omega \vAbSES(B,A) \dar["\sim"] \rar["\Omega(g^*)"] & \Omega \vAbSES(B',A) \dar["\sim"] \\
      \vHom(B,A) \rar["\phi \mapsto \phi \circ g"] & \vHom(B',A) \period
    \end{tikzcd}
  \end{equation}
\end{lemma}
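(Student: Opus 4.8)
The plan is to reduce the square to a direct computation on \emph{path data}; indeed, the lemma amounts to the (contravariant) naturality in \(B\) of the isomorphism of \cref{prop:retakh}. By \cref{lem:abses-path-data}, a loop in \(\vAbSES(B,A)\) at \(\pt\) is the same as an element of \(\verb+abses_path_data+(A \oplus B, A \oplus B)\), i.e.\ a homomorphism \(\phi : A \oplus B \to A \oplus B\) restricting to the identity on the \(A\)-summand and inducing the identity on the \(B\)-quotient, and the proof of \cref{prop:retakh} shows that the left vertical isomorphism sends \(\phi\) to the composite \(B \to A \oplus B \xra{\phi} A \oplus B \to A\) (inclusion of the second summand, then \(\phi\), then projection onto the first summand), and likewise on the right with \(B\) replaced by \(B'\). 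Since \(\vHom(B',A)\) is a set, it suffices to transport a loop \(\phi\) around the square along the vertical isomorphisms and check that the resulting two homomorphisms \(B' \to A\) agree pointwise.

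Going down and then right sends \(\phi\) to \(b \mapsto \rho_A\phi(0,b)\) and then, by precomposing with \(g\), to \(b' \mapsto \rho_A\phi(0,g(b'))\), where \(\rho_A : A \oplus B \to A\) is the projection. Going right and then down requires unfolding \(\Omega(g^*)\) on path data, which has two ingredients. First, as \(g^*(E)\) is constructed as a pullback, the action of \(g^*\) on path data is computed coordinatewise: \(g^*(\phi)\) acts on the middle group \(\{(e,b') : p_E(e) = g(b')\}\) of \(g^*(E)\) by \((e,b') \mapsto (\phi(e),b')\). Second, the map \(\Omega(g^*)\) in the diagram uses the pointedness of \(g^*\), so it conjugates \(g^*(\phi)\) by the chosen path data \(g^*(\pt_{\vAbSES(B,A)}) = \pt_{\vAbSES(B',A)}\) witnessing that pullback preserves triviality, whose underlying isomorphism carries \(\{((a,b),b') : b = g(b')\}\) to \(A \oplus B'\) via \(((a,b),b') \mapsto (a,b')\). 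Carrying out this conjugation yields the path data \(\chi : A \oplus B' \to A \oplus B'\) with \(\chi(a,b') = (\rho_A\phi(a,g(b')),\,b')\), and applying the right vertical isomorphism to \(\chi\) gives \(b' \mapsto \rho_A\chi(0,b') = \rho_A\phi(0,g(b'))\). This equals the other composite, so the square commutes.

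The main obstacle is the bookkeeping of the second paragraph: correctly identifying \(\Omega(g^*)\) on path data as conjugation of \(g^*(\phi)\) by the triviality witness, and pinning down that witness's underlying isomorphism (which can also be extracted from \cref{prop:pullback-characterisation}). This is precisely the sort of coherence that the path-data formulation is designed to tame: because the pullback functor and the triviality witness are assembled from concrete maps, all of the composites above compute, so after unfolding the definitions the proof assistant is left with the evident identity \(\rho_A\phi(0,g(b')) = \rho_A\phi(0,g(b'))\). Since every object in play is a set and every map a homomorphism, no higher coherence intervenes once the two sides are shown pointwise equal.
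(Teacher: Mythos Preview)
Your proof is correct and follows the same overall strategy as the paper: reduce to path data and compute both composites around the square pointwise. The one difference is in how you identify \(\Omega(g^*)\) on path data. The paper observes that, by path induction (first for paths with free endpoints, then specialising to loops), the action of \(g^*\) on path data is simply ``pull back the path data,'' which immediately yields the commuting rectangle
\[
\begin{tikzcd}[column sep=large]
  B' \rar \dar[equals] & A \oplus B' \rar["\Omega(g^*)(\phi)"] \dar["\id \oplus g"] & A \oplus B' \dar["\id \oplus g"] \rar & A \dar[equals] \\
  B' \rar["{(0,g)}"] & A \oplus B \rar["\phi"] & A \oplus B \rar & A
\end{tikzcd}
\]
and the result follows by reading off the two rows. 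You instead unfold \(\Omega(g^*)\) by hand: compute \(g^*(\phi)\) on the concrete pullback and conjugate by the explicit triviality witness \(g^*(\pt)\simeq \pt\). Both routes arrive at \(b' \mapsto \rho_A\phi(0,g(b'))\); the paper's path-induction step is slicker and sidesteps naming the triviality witness, while your version makes every map explicit, which is closer to what one actually verifies in the formalisation.
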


\begin{proof}
  Let \(p : A \oplus B = A \oplus B\) be an element of the upper left corner, seen as path data.
  By path induction, one can easily show that the action of \(\Omega(g^*)\) on paths is given by pulling back the path data.
  (Formally, one first proves this for paths with free endpoints, then you can specialise to loops.)
  This means that the following diagram commutes
  \[ \begin{tikzcd}[column sep=large]
      B' \rar \dar[equals]  & A \oplus B' \rar["\Omega(g^*)(p)"] \dar["\id \oplus g"] & A \oplus B' \dar["\id \oplus g"] \rar & A \dar[equals] \\
      B' \rar["{(0,g)}"] & A \oplus B \rar["p"] & A \oplus B \rar & A
    \end{tikzcd} \]
  where we have used the functions underlying the path data \(p\) and \(\Omega(g^*)(p)\), and the unlabeled arrows are the natural ones into or out of a biproduct.
  The composites of the top and bottom rows above are the results of sending \(p\) around the top-right and bottom-left corners of Diagram~\ref{eq:loops-hom}, respectively.
  Since this latter diagram commutes, so does Diagram~\ref{eq:loops-hom}.
\end{proof}

\begin{proposition}[{\cite[Theorem~III.3.4]{Mac63}}] \label{prop:six-term}
  We have an exact sequence of abelian groups:\href{https://github.com/HoTT/Coq-HoTT/blob/832aef3e6fff0f5b953ed170522e1a3d6288a4bb/theories/Algebra/AbSES/SixTerm.v}{\(^\coderef\)}
  \[ \begin{tikzcd}[cramped, column sep=small]
      0 \rar & \vHom(B,G) \rar["p^*"]
      & \vHom(E,G) \rar["i^*"] \ar[d, phantom, ""{coordinate, name=Z}]
      & \vHom(A,G) \ar[dll, rounded corners,
          to path={ -- ([xshift=2ex]\tikztostart.east)
            |- (Z) [near end]\tikztonodes
            -| ([xshift=-2ex]\tikztotarget.west)
            -- (\tikztotarget)}] \\
      & \vExt^1(B,G) \rar["p^*"] & \vExt^1(E,G) \rar["i^*"] & \vExt^1(A,G) \period
      \end{tikzcd} \]
\end{proposition}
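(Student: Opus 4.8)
The plan is to instantiate \cref{thm:pullback-fibre-sequence} at $G$, obtaining the fibre sequence $\vAbSES(B,G) \xra{p^*} \vAbSES(E,G) \xra{i^*} \vAbSES(A,G)$, and then to feed it into the long exact sequence of homotopy groups \cite[Theorem~8.4.6]{hottbook}. Since all three of these types are $1$-types, their homotopy groups vanish above degree~$1$; in particular $\pi_2\vAbSES(A,G) = 0$, so the long exact sequence collapses to a six-term exact sequence of pointed sets, beginning with a zero:
\[ 0 \to \pi_1\vAbSES(B,G) \to \pi_1\vAbSES(E,G) \to \pi_1\vAbSES(A,G) \to \pi_0\vAbSES(B,G) \to \pi_0\vAbSES(E,G) \to \pi_0\vAbSES(A,G) \period \]

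Next I would identify the terms and maps. As each type is a $1$-type, its loop space is already a set, so $\pi_1$ agrees with $\Omega$, and \cref{prop:retakh} identifies $\pi_1\vAbSES(X,G)$ with $\vHom(X,G)$ for $X = A, E, B$. By the naturality recorded in \cref{eq:loops-hom}, the maps $\Omega(p^*)$ and $\Omega(i^*)$ become precomposition with $p$ and with $i$, that is, the maps $p^*$ and $i^*$ between $\vHom$-groups appearing in the statement. On the other hand $\pi_0\vAbSES(X,G) \jeq \vExt^1(X,G)$ by definition, and $\pi_0(p^*)$, $\pi_0(i^*)$ are, by the construction of the functorial action on $\vExt^1$, the maps labelled $p^*$ and $i^*$ there. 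This yields the six-term sequence, exact at every spot, as a sequence of pointed sets.

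Finally I would promote this to a sequence of abelian groups. All six terms are abelian groups ($\vHom(X,G)$ of homomorphisms, $\vExt^1(X,G)$ via the Baer sum); the maps between $\vHom$-groups are homomorphisms, being pre- and post-composition, while $p^*$ and $i^*$ on $\vExt^1$ are homomorphisms by the earlier fact that pullback defines group homomorphisms on $\vExt^1$. The one remaining point, and the main obstacle, is that the connecting map $\vHom(A,G) \to \vExt^1(B,G)$ is a homomorphism. Tracing the section constructed in the proof of \cref{thm:pullback-fibre-sequence}---which sends an extension to a suitable cokernel---through the identifications above, one sees that this connecting map sends $h : A \to G$ to the class of the pushout $h_*(E)$. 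Writing $h + h' = \nabla \circ (h \oplus h') \circ \Delta$ and combining functoriality of pushout, its compatibility with direct sums, and the bifunctor law, one obtains $(h + h')_*(E) = h_*(E) + h'_*(E)$ in $\vExt^1(B,G)$, so the connecting map is a homomorphism. (Alternatively, the fibre sequence consists of pointed maps between grouplike, homotopy-commutative $H$-spaces, which already forces this.) Exactness as a sequence of abelian groups is then nothing more than the exactness of pointed sets already obtained, which completes the argument.
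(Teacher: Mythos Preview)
Your proposal is correct and follows the paper's approach: apply the long exact sequence of homotopy groups to the fibre sequence of \cref{thm:pullback-fibre-sequence}, then identify the \(\pi_1\)-terms with \(\vHom\) via \cref{prop:retakh} and the naturality lemma, and the \(\pi_0\)-terms with \(\vExt^1\) by definition. Your additional discussion of why the maps are group homomorphisms is more explicit than the paper's proof, which leaves this implicit.

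One point deserves a caveat. The paper explicitly remarks that identifying the connecting map as \(h \mapsto h_*(E)\) from the fibre sequence is ``somewhat tedious''; they have a paper proof but not a formalisation, and instead formalised exactness directly with that connecting map. Your phrase ``tracing the section constructed in the proof of \cref{thm:pullback-fibre-sequence} \ldots\ one sees'' understates this: the connecting map \(\Omega Z \to X\) in a fibre sequence \(X \to Y \to Z\) is obtained by composing \(\omega \mapsto (\pt_Y, \omega)\) into \(\fib{}(Y \to Z)\) with the inverse of the fibre equivalence, and unwinding that composite through the cokernel construction and the identification of \cref{prop:retakh} is genuinely some work. Your parenthetical alternative---that the maps in the fibre sequence are \(H\)-maps between homotopy-commutative \(H\)-spaces, so the induced maps on homotopy groups (including the connecting map) are automatically homomorphisms---is the cleaner route and avoids the explicit identification entirely; that is the argument to lean on if you want a self-contained proof of the proposition as stated.
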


\begin{proof}
  This sequence comes from the long exact sequence of homotopy groups~\cite[Theorem~8.4.6]{hottbook} associated to the fibre sequence of~\cref{thm:pullback-fibre-sequence}, using \cref{prop:retakh} and the previous lemma to identify \(\Omega\vAbSES(-,G)\) with \(\vHom(-,G)\).
\end{proof}

\begin{remark}
  The connecting map \(\vHom(A,G) \to \vExt^1(B,G)\) in the sequence above is given by \(\phi \mapsto \phi_* E\).
  Showing this from the fibre sequence is somewhat tedious; we have a proof on paper, but not yet a formalisation.
  Instead, we have formalised a direct proof that the map just stated yields exactness of the sequence.\href{https://github.com/HoTT/Coq-HoTT/blob/832aef3e6fff0f5b953ed170522e1a3d6288a4bb/theories/Algebra/AbSES/SixTerm.v\#L77}{\(^\coderef\)}\href{https://github.com/HoTT/Coq-HoTT/blob/832aef3e6fff0f5b953ed170522e1a3d6288a4bb/theories/Algebra/AbSES/SixTerm.v\#L134}{\(^\coderef\)}
\end{remark}

We apply the six-term exact sequence to compute Ext groups of cyclic groups:

\begin{corollary}[{\cite[Proposition~III.1.1]{Mac63}}] \label{cor:ext-cyclic}
  For any \(n > 0\) and abelian group \(A\), we have\href{https://github.com/HoTT/Coq-HoTT/blob/832aef3e6fff0f5b953ed170522e1a3d6288a4bb/theories/Algebra/AbSES/SixTerm.v\#L222}{\(^\coderef\)}
  \[ \vExt^1(\Zb/n, A) \cong A /n . \]
\end{corollary}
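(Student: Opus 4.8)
The plan is to derive this from the six-term exact sequence of \cref{prop:six-term} applied to the short exact sequence
\[ \Zb \xra{\cdot n} \Zb \to \Zb/n \comma \]
in which the first map is multiplication by \(n\) (an embedding because \(n > 0\)) and the second is the quotient map. Taking \(G\) to be \(A\) in \cref{prop:six-term} produces an exact sequence whose \(\vHom\)-terms are \(\vHom(\Zb/n, A)\), \(\vHom(\Zb, A)\), \(\vHom(\Zb, A)\) and whose \(\vExt^1\)-terms are \(\vExt^1(\Zb/n, A)\), \(\vExt^1(\Zb, A)\), \(\vExt^1(\Zb, A)\). Since \(\Zb\) is projective---as observed after \cref{prop:projective-iff-ext-vanishes}---both copies of \(\vExt^1(\Zb, A)\) vanish, so the tail collapses and what remains is an exact sequence
\[ 0 \to \vHom(\Zb/n, A) \to \vHom(\Zb, A) \xra{i^*} \vHom(\Zb, A) \to \vExt^1(\Zb/n, A) \to 0 \period \]

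Next I would use the natural isomorphism \(\vHom(\Zb, A) \cong A\) given by evaluation at the generator \(1 : \Zb\). Under this identification the map \(i^*\), being precomposition with multiplication by \(n\), becomes multiplication by \(n\) on \(A\); this uses that precomposition is additive and that the inclusion \(i\) is \(n\) times the identity on the generator. The displayed sequence then reads \(0 \to \vHom(\Zb/n, A) \to A \xra{\cdot n} A \to \vExt^1(\Zb/n, A) \to 0\), so exactness at the last nontrivial term exhibits \(\vExt^1(\Zb/n, A)\) as the cokernel of \(\cdot n : A \to A\), namely \(A/nA\). Since \(A/n\) is by definition the quotient of \(A\) by the image of multiplication by \(n\), this is precisely the required isomorphism.

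The main obstacle is bookkeeping rather than conceptual difficulty. One must construct the short exact sequence \(\Zb \xra{\cdot n} \Zb \to \Zb/n\) and verify its exactness for \(n > 0\); check that the evaluation isomorphism \(\vHom(\Zb, A) \cong A\) intertwines \((\cdot n)^*\) with multiplication by \(n\); and match the library's definition of \(A/n\) with the cokernel of \(\cdot n\). The most delicate point is keeping the group structures aligned: in particular one needs the connecting map \(\vHom(\Zb, A) \to \vExt^1(\Zb/n, A)\) to be a homomorphism, which---as the remark following \cref{prop:six-term} notes---is handled in the formalisation via the explicit description \(\phi \mapsto \phi_* E\) of this map rather than extracted abstractly from the fibre sequence.
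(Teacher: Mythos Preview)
Your proposal is correct and follows essentially the same approach as the paper: apply the six-term exact sequence to \(\Zb \xra{n} \Zb \to \Zb/n\), kill the \(\vExt^1(\Zb,A)\) terms using projectivity of \(\Zb\), and identify \(\vHom(\Zb,A)\) with \(A\) so that the surviving map is multiplication by \(n\) and its cokernel is \(A/n\). Your added remarks about identifying \(i^*\) with multiplication by \(n\) and handling the connecting map as a homomorphism are accurate elaborations of steps the paper leaves implicit.
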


\begin{proof}
  The short exact sequence
  \( \Zb \xra{n} \Zb \to \Zb /n \)
  yields a six-term exact sequence
  \[ \cdots \to \vHom(\Zb, A) \xra{n^*} \vHom(\Zb, A) \to \vExt^1(\Zb/n,A) \to \vExt^1(\Zb,A) \to \cdots \]
  in which the term \(\vExt^1(\Zb,A) \) vanishes since \(\Zb\) is projective.\href{https://github.com/HoTT/Coq-HoTT/blob/56629c19010a7d7155b22aad4525f9b2ac1bd584/theories/Algebra/AbSES/Ext.v\#L118}{\(^\coderef\)}\href{https://github.com/HoTT/Coq-HoTT/blob/56629c19010a7d7155b22aad4525f9b2ac1bd584/theories/Algebra/AbGroups/Cyclic.v\#L56}{\(^\coderef\)}
  This means that the map \(\vHom(\Zb,A) \to \vExt^1(\Zb/n,A)\) is the cokernel of the preceding map.
  By identifying \(\vHom(\Zb,A)\) with \(A\), the claim follows.
\end{proof}

\section{The Long Exact Sequence} \label{sec:les}

We describe our formalisation of the higher Ext groups \(\vExt^n(B,A)\) and their contravariant long exact sequence, which largely follows~\cite[Chapter~III.5]{Mac63}.
The covariant version can be constructed from the arguments in~\cite[Chapter~VII.5]{Mit65}, but we have not formalised this.
The Baer sum is not yet formalised for \(\vExt^n\) \((n > 1)\), so we only have a long exact sequence of \emph{pointed sets}.
Nevertheless, exactness for pointed sets and abelian groups coincide, so we automatically get a long exact sequence of the latter once we have the higher Baer sum.

The formalisation of this section is in the separate repository \href{https://github.com/jarlg/Yoneda-Ext}{\verb~Yoneda-Ext~}, whose \verb~README~ file explains how to set up and build the code related to this chapter.
There are also comments in the code which explain details beyond what we cover here.

\subsection{The Type of Length-\texorpdfstring{\boldmath{\(n\)}}{n} Exact Sequences}

We start by defining a type \(\vES^n\) which we will equip with an equivalence relation by which \(\vExt^n\) will be the quotient.
These constructions will yield functors, which we explain.

The type \(\vES^n(B,A)\) of \textbf{length-\boldmath{\(n\)} exact sequences} is recursively defined as:%
\(\href{https://github.com/jarlg/Yoneda-Ext/blob/0d8bfe8e168bbdf325e805d7268e826e889189f0/ES.v\#L16}{^\coderef}\)
\begin{lstlisting}
Fixpoint ES (n : nat) : AbGroup^op -> AbGroup -> Type
  := match n with
    | 0%nat => fun B A => Hom B A
    | 1%nat => fun B A => AbSES B A
    | S n => fun B A => exists M, (ES n M A) * (AbSES B M)
    end.
\end{lstlisting}
Thus \(\vES^0(B,A)\) is definitionally \(\vHom(B,A)\), and \(\vES^1(B,A)\) is definitionally \(\vAbSES(B,A)\).
One could also have started the induction at \(n \jeq 1\) instead of \(n \jeq 2\), but it is convenient to have this definitional equality at level \(n \jeq 1\).
The functoriality of \(\vES^n\) is inherited from \(\vAbSES\) and defined in the obvious way by pulling back and pushing out.
For \(n > 0\), an element of \(\vES^{n+1}(B,A)\) is denoted by \((F, E)_M\), with the obvious meaning.
The type \(\vES^n(B,A)^{\href{https://github.com/jarlg/Yoneda-Ext/blob/0d8bfe8e168bbdf325e805d7268e826e889189f0/ES.v\#L32}{\coderef}}\) is pointed by recursion, using the trivial abelian group in the place of \(M\) in the inductive step.

\begin{definition}
  The \textbf{splice} operation is defined as%
  \(^{\href{https://github.com/jarlg/Yoneda-Ext/blob/0d8bfe8e168bbdf325e805d7268e826e889189f0/ES.v\#L144}{\coderef}}\)
  \[ F \splice E :\jeq (F, E)_B : \vES^n(B,A) \to \vAbSES(C,B) \to \vES^{n+1}(C,A). \]
\end{definition}

By induction one can define a general splicing operation in which the second parameter can have arbitrary length%
\(\href{https://github.com/jarlg/Yoneda-Ext/blob/0d8bfe8e168bbdf325e805d7268e826e889189f0/ES.v\#L166}{^\coderef}\),
but we only need the restricted version above.

Now we equip \(\vES^n(B,A)\) with a relation.

\begin{definition}
  We define a relation \(\verb~es_zig~ : \vES^n(B,A) \to \vES^n(B,A) \to \Type\) recursively as follows.
  For \(n = 0, 1\), \verb~es_zig~ is the identity type.
  For \(n \geq 2\), a relation between two elements \((F,E)_M\) and \((Y, X)_N\) consists of a homomorphism \(f : \vHom(M,N)\) along with a path \(f_*(E) = X\) and a relation \(\verb~es_zig~(F, f^*(Y))\) (using functoriality of \(\vES^n\)).%
  \(\href{https://github.com/jarlg/Yoneda-Ext/blob/0d8bfe8e168bbdf325e805d7268e826e889189f0/ES.v\#L186}{^\coderef}\)
\end{definition}

The relation \verb~es_zig~ generates an equivalence relation \verb~es_eqrel~\(\href{https://github.com/jarlg/Yoneda-Ext/blob/0d8bfe8e168bbdf325e805d7268e826e889189f0/ES.v\#L224}{^\coderef}\) (denoted \verb!~! in the code) whose propositional truncation is \verb~es_meqrel~\(\href{https://github.com/jarlg/Yoneda-Ext/blob/0d8bfe8e168bbdf325e805d7268e826e889189f0/ES.v\#L253}{^\coderef}\).
The functoriality of \(\vES^n\) respects these relations.%
\(\href{https://github.com/jarlg/Yoneda-Ext/blob/0d8bfe8e168bbdf325e805d7268e826e889189f0/ES.v\#L264}{^\coderef}
\href{https://github.com/jarlg/Yoneda-Ext/blob/0d8bfe8e168bbdf325e805d7268e826e889189f0/ES.v\#L289}{^\coderef}\)
Basic results on equivalence relations are contained in \href{https://github.com/jarlg/Yoneda-Ext/blob/0d8bfe8e168bbdf325e805d7268e826e889189f0/EquivalenceRelation.v}{\verb~EquivalenceRelation.v~}.

We emphasise that equivalence relation \verb~es_eqrel~ is \emph{not} equivalent to the identity type of \(\vES^n\).
Rather, it is an approximation of the identity type of the classifying space of the category \(\vES^n\) (which we do not know if one can construct in HoTT).
See, e.g.,~\cite[Chapter~III.5]{Mac63} for related discussion.

\begin{definition}
  The pointed set \(\vExt^n(B,A)\) is the quotient of \(\vES^n(B,A)\) by the equivalence relation \verb~es_meqrel~.%
  \(\href{https://github.com/jarlg/Yoneda-Ext/blob/0d8bfe8e168bbdf325e805d7268e826e889189f0/HigherExt.v\#L15}{^\coderef}\)
\end{definition}

The splice operation descends to this quotient.%
\(\href{https://github.com/jarlg/Yoneda-Ext/blob/0d8bfe8e168bbdf325e805d7268e826e889189f0/HigherExt.v\#L96}{^\coderef}\)
By pushing out%
\(\href{https://github.com/jarlg/Yoneda-Ext/blob/0d8bfe8e168bbdf325e805d7268e826e889189f0/HigherExt.v\#L80}{^\coderef}\)
and pulling back%
\(\href{https://github.com/jarlg/Yoneda-Ext/blob/0d8bfe8e168bbdf325e805d7268e826e889189f0/HigherExt.v\#L65}{^\coderef}\)
extensions, \(\Ext^n\) becomes a functor in each variable as well.
Moreover, we have equalities \( f^*(F) \splice E = F \splice f_*(E) \) whenever this expression makes sense, by the definition of \verb~es_zig~.%
\(\href{https://github.com/jarlg/Yoneda-Ext/blob/0d8bfe8e168bbdf325e805d7268e826e889189f0/HigherExt.v\#L140}{^\coderef}\)

\begin{remark}
  The definition of \(\vExt^{n+1}(B,A)\) is, more conceptually, the \((n+1)\)-fold tensor product of functors
  \( \vExt^{n+1}(B,A) = \vExt^n(-,A) \otimes \vExt^1(B,-) \) (see, e.g.,~\cite[Theorem~9.20]{GvBII} or~\cite[Eq.~4.3.4]{Yoneda1960}).
  In our setup, this is a tensor product of Set-valued functors, which can be made into an abelian group by a construction similar to the Baer sum of~\cref{ssec:baer-sum} (though we have not yet formalised this).
  Alternatively, one could define \(\vExt^{n+1}(B,A)\) as the \((n+1)\)-fold tensor product of functors \emph{into abelian groups}.
  \cite[Lemma~2.1]{GvBI} implies that these two definitions coincide.
  We have chosen the present approach because we do not know of a direct construction of the long exact sequence for the latter approach.
\end{remark}

\subsection{The Long Exact Sequence}

We now begin working towards the long exact sequence, following the proof of~\cite[Theorem~XII.5.1]{Mac63}.
As explained at the beginning of this section, we have only formalised the long exact sequence of \emph{pointed sets}---however, exactness for pointed sets is the same as for abelian groups.
Let us first recall the statement:

\begin{theorem} \label{thm:les}
  Let \(A \xra{i} E \xra{p} B\) be a short exact sequence of abelian groups.
  For any abelian group \(G\), there is a long exact sequence by pulling back:%
  \(\href{https://github.com/jarlg/Yoneda-Ext/blob/0d8bfe8e168bbdf325e805d7268e826e889189f0/LES.v\#L15}{^\coderef}
  \href{https://github.com/jarlg/Yoneda-Ext/blob/0d8bfe8e168bbdf325e805d7268e826e889189f0/LES.v\#L47}{^\coderef}
  \href{https://github.com/jarlg/Yoneda-Ext/blob/0d8bfe8e168bbdf325e805d7268e826e889189f0/LES.v\#L94}{^\coderef}\)
  \[ \cdots \xra{i^*} \verb+Ext+^n(A,G) \xra{- \splice E} \verb+Ext+^{n+1}(B,G) \xra{p^*} \verb+Ext+^{n+1}(E,G) \xra{i^*} \cdots \period \]
\end{theorem}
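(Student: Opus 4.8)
The plan is to follow the classical argument of~\cite[Theorem~XII.5.1]{Mac63}, originally due to Schanuel. First one verifies that the displayed sequence is a \emph{complex}, i.e.\ that consecutive composites are constant. Using functoriality together with the identity \(f^*(F) \splice X = F \splice f_*(X)\), each of the three composites \(i^* \circ p^*\), \(p^* \circ (- \splice E)\) and \((- \splice E) \circ i^*\) is rewritten as a splice of an \(n\)-extension with a \emph{split} short exact sequence: for the first this uses that \(p \circ i : A \to B\) is the zero map, so pulling back along it yields a split sequence; for the second, that \(p^*E\) splits via the diagonal \(E \to E \times_B E\); and for the third, that the pushout \(i_*E\) splits. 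Since a splice with a split short exact sequence is equivalent to the basepoint, the complex property follows.

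The substance of the theorem is exactness: at each spot, an element that maps to the basepoint merely lies in the image of the incoming map. I would prove this by induction on \(n\), with the statement quantified over \emph{all} short exact sequences at once, so that the inductive hypothesis may be applied to auxiliary short exact sequences. The base case is the six-term exact sequence of \cref{prop:six-term}---itself obtained from the fibre sequence of \cref{thm:pullback-fibre-sequence}---which yields exactness at the \(\vHom\)-spots and at \(\vExt^1(B,G)\) and \(\vExt^1(E,G)\). The inductive step uses ``dimension shifting'': any element of \(\vExt^{n+1}(X, G)\) is represented by a splice \(F \splice W\) with \(W : M \to N \to X\) a short exact sequence and \(F : \vES^n(M, G)\). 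Given an element in the kernel of \(p^*\), \(i^*\), or \(- \splice E\) at an \(\vExt^{n+1}\)-spot, one writes it in this form, forms from \(W\) and the sequence \(A \to E \to B\) an auxiliary short exact sequence---a suitable pullback or pushout of \(W\) along the maps of the given sequence, such as \(N \times_B E\), which sits in short exact sequences \(A \to N \times_B E \to N\) and \(M \to N \times_B E \to E\)---applies the inductive hypothesis (exactness at the \(\vExt^n\)-spots for that auxiliary sequence) to obtain an extension over the new middle group, and pulls it back to produce the required preimage. This rests throughout on the compatibilities between splicing, pullback and pushout, notably \(f^*(F) \splice X = F \splice f_*(X)\) and the pullback/pushout-square consequences of \cref{prop:pullback-characterisation}.

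The main obstacle is that \(\vExt^n\) is the quotient of \(\vES^n\) by \verb~es_meqrel~, which---as the excerpt emphasises---is \emph{not} the identity type of \(\vES^n\); there is no presentation of \(\vExt^n\) as literal isomorphism classes of \(n\)-extensions to fall back on, as one has classically. Every ``equality of \(n\)-extensions'' produced along the way must therefore be exhibited as an explicit zig-zag of the generating relation \verb~es_zig~, and each exactness verification reduces to constructing such zig-zags and discharging the coherences they entail. This is compounded by our working with a long exact sequence of \emph{pointed sets} only: lacking a group structure on \(\vExt^n\) for \(n > 1\), we cannot use the familiar device of subtracting to turn ``\(x = y\)'' into ``\(x - y\) is the basepoint'', so exactness must be checked directly. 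Finally, the universe levels grow each time extensions are spliced, so the constraints must be threaded through the whole argument. I expect these---rather than any isolated conceptual hurdle---to be where most of the effort goes.
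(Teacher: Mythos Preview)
Your outline diverges from the paper's in its core technique, despite both citing~\cite[Theorem~XII.5.1]{Mac63}. You propose induction on $n$: the six-term sequence as base case, then dimension shifting via auxiliary short exact sequences built from $W$ and $E$ for the inductive step. The paper instead isolates and proves Mac~Lane's Lemmas~XII.5.3--5, whose content is a \emph{characterisation} of when a splice $F \splice E$ (with $E$ a short exact sequence and $F$ an $n$-extension) is related to the basepoint: \(\verb~es_eqrel~(\pt, F \splice E)\) is shown logically equivalent to concrete fibre conditions such as $\fib{i^*}(F)$ and a type \verb~es_ii_family~$(F,E)$. These equivalences are proved by induction on the \emph{length of the zig-zag} witnessing the relation, not on $n$; exactness at each spot of the long sequence is then read off from them (the paper in-lines Lemma~XII.5.2 rather than reducing through it).

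There is a gap in your plan at precisely the place you flag as the main obstacle. Take $[S]$ in the kernel of, say, $i^*$ at level $n+1$ and write $S = F \splice W$. That $i^*(S)$ is trivial means only that \emph{some} zig-zag connects $F \splice i^* W$ to the basepoint, and this zig-zag may pass through splices $(F_k, E_k)_{M_k}$ with entirely unrelated middle groups and decompositions. To invoke your inductive hypothesis you would need an $n$-extension lying in the kernel of a specific map between $\vExt^n$-groups for your auxiliary sequence---but nothing in your outline extracts such data from an arbitrary zig-zag at level $n+1$. This extraction is exactly what Lemma~XII.5.5 supplies, and its proof \emph{is} the induction on zig-zag length that your sketch does not contain. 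Without that lemma (or an equivalent), the dimension-shifting step does not go through for Yoneda~$\vExt$, even though the analogous argument is routine for a resolution-based definition where equality of classes is detected by a single chain map rather than a zig-zag.
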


The proof in~\cite{Mac63} first discusses the six-term exact sequence, which we proved as \cref{prop:six-term}.
It then reduces the question to exactness at the domain of the connecting map (Lemma~XII.5.2, loc. cit.), and proves exactness at that spot using Lemmas~XII.5.3, XII.5.4, and XII.5.5.
We will show the three latter lemmas, then directly prove exactness at the other spots, essentially ``in-lining'' Lemma~XII.5.2.

The various constructions we need to do are simpler to carry out on the level of \(\vES^n\) as opposed to \(\vExt^n\).
For this reason we work and formulate things in terms of the former, and then deduce the desired statement for the latter.

Before attacking Lemma XII.5.3, we show the following:

\begin{lemma} \label{lem:5-3-prelim}
  Consider two pairs of short exact sequences which can be spliced:
  \[ (A \xra{l} Y \xra{s} B', \ B' \xra{k} X \xra{r} C), \qquad (A \xra{j} F \xra{q} B, \ B \xra{i} E \xra{p} C). \]
  For any element of \(\verb~es_zig~(Y \splice X, F \splice E)\), we have induced maps
  \( \fib{s_*}(X) \to \fib{q_*}(E) \href{https://github.com/jarlg/Yoneda-Ext/blob/0d8bfe8e168bbdf325e805d7268e826e889189f0/XII_5.v\#L45}{^\coderef} \) and \( \fib{i^*}(F) \to \fib{k^*}(Y) \href{https://github.com/jarlg/Yoneda-Ext/blob/0d8bfe8e168bbdf325e805d7268e826e889189f0/XII_5.v\#L82}{^\coderef} \).
\end{lemma}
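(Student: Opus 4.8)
The plan is to unpack the given witness of $\verb~es_zig~(Y\splice X, F\splice E)$ into elementary data and then define both maps by pulling back and pushing out along suitable homomorphisms of middle groups. Since $Y\splice X = (Y,X)_{B'}$ and $F\splice E = (F,E)_B$ both lie in $\vES^2$, an element of $\verb~es_zig~(Y\splice X, F\splice E)$ consists, by definition, of a homomorphism $f : \vHom(B',B)$, a path $\alpha : f_*(X) = E$ in $\vAbSES(C,B)$, and---since $\verb~es_zig~$ at level $1$ is the identity type---a path $\beta : Y = f^*(F)$ in $\vAbSES(B',A)$.

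From $\alpha$ and $\beta$ I would first extract two homomorphisms of middle groups. Viewing $\alpha$ as path data via \cref{lem:abses-path-data}, it is an isomorphism from the pushout $f_*(X)$ to $E$ compatible with the inclusions and projections; composing it with the canonical map $X\to f_*(X)$ from the pushout cocone gives a homomorphism $\psi : X\to E$, and compatibility with the inclusions together with the pushout square yields $\psi\circ k = i\circ f$. Dually, $\beta$ is an isomorphism from $Y$ to the pullback $f^*(F)$; composing with the canonical map $f^*(F)\to F$ gives $\chi : Y\to F$ with $q\circ\chi = f\circ s$, using compatibility of $\beta$ with the projections and the pullback square.

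The two maps are then defined as follows. For $\fib{i^*}(F)\to\fib{k^*}(Y)$, send a pair $(W,\gamma)$ with $W : \vAbSES(E,A)$ and $\gamma : i^*(W) = F$ to the pair consisting of $\psi^*(W) : \vAbSES(X,A)$ and the path
\[ k^*(\psi^*(W)) = (\psi\circ k)^*(W) = (i\circ f)^*(W) = f^*(i^*(W)) = f^*(F) = Y, \]
whose steps use functoriality of pullback, the square $\psi\circ k = i\circ f$, the path $\gamma$, and $\beta^{-1}$. For $\fib{s_*}(X)\to\fib{q_*}(E)$, send $(U,\epsilon)$ with $U : \vAbSES(C,Y)$ and $\epsilon : s_*(U) = X$ to $\chi_*(U) : \vAbSES(C,F)$ together with the path
\[ q_*(\chi_*(U)) = (q\circ\chi)_*(U) = (f\circ s)_*(U) = f_*(s_*(U)) = f_*(X) = E, \]
using functoriality of pushout, the square $q\circ\chi = f\circ s$, the path $\epsilon$, and $\alpha$.

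The conceptual content is just the two short diagram chases producing $\psi\circ k = i\circ f$ and $q\circ\chi = f\circ s$. The main obstacle, as is typical in this development, is the bookkeeping with path data: the displayed chains are assembled from coherences established earlier---pullback and pushout respecting composition of homomorphisms---and must be produced in a form that the subsequent lemmas towards \cref{thm:les} can consume. Carrying out the argument with \emph{path data} rather than honest paths, as throughout \cref{sec:fibre-sequence}, keeps these manipulations tractable.
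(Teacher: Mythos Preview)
Your proof is correct and follows essentially the same approach as the paper's. Your map $\chi : Y \to F$ is precisely the paper's $\phi$, obtained from the path data $Y = f^*(F)$ by composing with the pullback cone map, and your chain $q_*(\chi_*(U)) = (q\circ\chi)_*(U) = (f\circ s)_*(U) = f_*(s_*(U)) = f_*(X) = E$ is the same computation the paper writes as $q_*(\phi_*(G)) = f_*(s_*(G)) = f_*(X) = E$ (the paper streamlines slightly by path-inducting on $\epsilon$ so that $s_*(G) \jeq X$, whereas you carry $\epsilon$ through explicitly); the second map, which the paper dismisses as analogous, you spell out in the expected dual manner.
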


\begin{proof}
  We only describe the first map since the second is analogous.
  The zig from \(Y \splice X\) to \(F \splice E\) gives a homomorphism \(f : B' \to B\) along with two paths \(f^*(F) = Y\) and \(f_*(X) = E\).
  Let \( G : \fib{s_*}(X)\); by path induction we may assume \(q_*(G) \jeq X\).
  The path \(f^*(F) = Y\) means we have a commuting diagram:
  \[ \begin{tikzcd}
      A \dar[equals] \rar["l"] & Y \rar["s"] \dar["\phi"] & B' \dar["f"] \\
      A \rar["j"] & F \rar["q"] & B \period
    \end{tikzcd} \]
  Thus \(\phi_*(G)\) defines an element of \(\fib{q_*}(E)\) by
  \( q_*(\phi_*(G)) = f_*(s_*(G)) \jeq f_*(X) = E. \)
\end{proof}

\begin{lemma}[{\cite[Lemma~XII.5.3]{Mac63}}] \label{lem:5-3}
  Given two short exact sequences \(A \xra{j} F \xra{q} B\) and \(B \xra{i} E \xra{p} C\), the following types are logically equivalent:%
  \(\href{https://github.com/jarlg/Yoneda-Ext/blob/0d8bfe8e168bbdf325e805d7268e826e889189f0/XII_5.v\#L173}{^\coderef}\)
  \begin{enumerate}
  \item \(\fib{i^*}(F)\);
  \item \(\fib{q_*}(E)\);
  \item \(\verb~es_eqrel~(\pt, F \splice E)\).
  \end{enumerate}
\end{lemma}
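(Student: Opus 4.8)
The plan is to establish \((1)\Leftrightarrow(2)\), then \((1)\Rightarrow(3)\), and finally \((3)\Rightarrow(1)\); together these give all the logical equivalences. The equivalence \((1)\Leftrightarrow(2)\) is the classical cokernel--kernel argument. Given \(G:\fib{i^*}(F)\), that is a short exact sequence \(A\to G\to E\) together with path data \(i^*(G)=F\), the composite \(G\to E\xra{p}C\) is an epimorphism whose kernel is the image of \(F\to G\), yielding a short exact sequence \(F\to G\to C\); the pushout version of \cref{prop:pullback-characterisation}, applied to the evident diagram with identity \(C\)-component, shows that its pushout along \(q\) is \(E\), so it lies in \(\fib{q_*}(E)\). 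Conversely, from \(H:\fib{q_*}(E)\), i.e.\ a sequence \(F\to H\to C\) with \(q_*(H)=E\), one forms the quotient \(H/A\) along \(A\xra{j}F\to H\), obtaining a short exact sequence \(A\to H\to H/A\); one then checks that \(H/A\cong E\) compatibly with all the maps, so that pulling back along \(i\) recovers \(F\). Both directions are diagram chases, with \cref{lem:abses-path-data} converting the resulting commuting triangles into paths.

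For \((1)\Rightarrow(3)\) I will exhibit an explicit zig-zag. Let \(G:\fib{i^*}(F)\). Taking the inclusion \(i\colon B\to E\) as the connecting map, the path \(i^*(G)=F\) is precisely the data of a zig from \(F\splice E=(F,E)_B\) to \((G,i_*(E))_E\). Now \(i_*(E)\) is the pushout of the sequence \(E\) along its own inclusion, which is trivial --- again an instance of the pushout form of \cref{prop:pullback-characterisation} --- so by \cref{lem:abses-path-data} there is a path \((G,i_*(E))_E=(G,\pt)_E\). Finally, for any \(S:\vAbSES(E,A)\) there is a single zig from \(\pt\) to \((S,\pt)_E\): take the zero map \(0\to E\) as connecting map; its pushout of the trivial sequence is again trivial, and its pullback of \(S\) lands in the contractible type \(\vAbSES(0,A)\), so both required paths hold automatically. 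Chaining these steps, and using that \verb~es_eqrel~ is closed under paths, gives \verb~es_eqrel~\((\pt,F\splice E)\). The dual argument, using the projection \(q\colon F\to B\) and the fact that pulling a sequence back along its own projection is trivial, gives \((2)\Rightarrow(3)\).

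The implication \((3)\Rightarrow(1)\) is the heart of the lemma, and is where \cref{lem:5-3-prelim} is used. For an arbitrary element \(\zeta=(Y,X)_M\) of \(\vES^2\) write \(Q(\zeta)\) for the analogue of condition \((1)\) attached to it, namely \(\fib{k^*}(Y)\) with \(k\colon M\to X\) the inclusion of \(X\); thus \(Q(F\splice E)=\fib{i^*}(F)\) is exactly condition \((1)\). I will prove by induction on the generating data of \verb~es_eqrel~ that \(\verb~es_eqrel~(\zeta,\zeta')\) implies \(Q(\zeta)\leftrightarrow Q(\zeta')\): reflexivity, symmetry and transitivity are immediate, and for a single \(\verb~es_zig~(\zeta,\zeta')\) we decompose both sides as splices and invoke \cref{lem:5-3-prelim}, which supplies a map from ``condition \((2)\) for \(\zeta\)'' to ``condition \((2)\) for \(\zeta'\)'' and a map from ``condition \((1)\) for \(\zeta'\)'' to ``condition \((1)\) for \(\zeta\)''; combining these with the equivalence \((1)\Leftrightarrow(2)\) at \(\zeta\) and at \(\zeta'\) produces the desired logical equivalence. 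Since \(Q(\pt)=\fib{(0\to C)^*}(\pt)\) is inhabited (because \(\vAbSES(0,A)\) is contractible), it follows that \(Q(F\splice E)=\fib{i^*}(F)\) is inhabited, which is \((1)\). The main obstacle I anticipate is the bookkeeping: across each generating step one must track which of the two conditions \((1)\), \((2)\) it is convenient to transport --- since a zig carries pullback-type and pushout-type data in opposite directions --- and one must make the decomposition of a general element of \(\vES^2\) into a splice explicit so that \cref{lem:5-3-prelim} applies.
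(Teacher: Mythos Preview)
Your proposal is correct and follows essentially the same route as the paper: the equivalence \((1)\Leftrightarrow(2)\) is the classical Mac~Lane argument, the passage into \((3)\) is a short explicit zig-zag (the paper phrases it as \((2)\Rightarrow(3)\), you as \((1)\Rightarrow(3)\), but these are dual and equally direct), and \((3)\Rightarrow(1)\) is an induction over the zig-zag structure using \cref{lem:5-3-prelim} together with \((1)\Leftrightarrow(2)\) at each intermediate splice. Your packaging of the last step as ``\(Q\) is invariant under \verb~es_eqrel~'' is a mild abstraction of the paper's induction on zig-zag length, but the content is the same.
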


\begin{proof}
  The logical equivalence of between (1) and (2) is as described in~\cite{Mac63}.%
  \(\href{https://github.com/jarlg/Yoneda-Ext/blob/0d8bfe8e168bbdf325e805d7268e826e889189f0/XII_5.v\#L131}{^\coderef}\)
  Moreover, the implication (2) to (3) is clear by the definition of \verb~es_zig~.
  We need to show that (3) implies (1), and we proceed by induction on the length of the zig-zag.

  In the base case we have an actual equality \(\pt = F \splice E \), in which case (1) clearly holds.
  For the inductive step, suppose we have two short exact sequences \( A \xra{l} Y \xra{s} B'\) and \(B' \xra{k} X \xra{r} C \) such that \(Y \splice X\) is related to \(\pt\) by a length \(n\) zig-zag, and we have either zig or a zag relating \(Y \splice X\) to \(F \splice E\).
  If we have a zig, then we use the induction hypothesis to get an element of \(\fib{s_*}(X)\) to which we apply the map
  \( \fib{s_*}(X) \to \fib{q_*} (E) \) from the previous lemma.
  This suffices since (1) and (2) are logically equivalent.

  If we have a zag, then the previous lemma gives a map \(\fib{k^*}(Y) \to \fib{i^*}(F)\), so we are done by the induction hypothesis.
\end{proof}

We reformulate condition (2) in a manner that generalises to \(\vES^n\).%
\(\href{https://github.com/jarlg/Yoneda-Ext/blob/0d8bfe8e168bbdf325e805d7268e826e889189f0/XII_5.v\#L214}{^\coderef}\)

\begin{lstlisting}
Definition es_ii_family `{Univalence} {n : nat} {C B A : AbGroup}
  : ES n.+1 B A -> ES 1 C B -> Type
  := fun E F => { alpha : { B' : AbGroup & B' $-> B }
                       & (es_eqrel pt (es_pullback alpha.2 E))
                         * (hfiber (abses_pushout alpha.2) F) }.
\end{lstlisting}

\begin{lemma}[{\cite[Lemma~XII.5.4]{Mac63}}]
  In the situation of the previous lemma, the types \(\fib{q_*}(E)\) and \(\verb~es_ii_family~(F,E)\) are logically equivalent.%
  \(\href{https://github.com/jarlg/Yoneda-Ext/blob/0d8bfe8e168bbdf325e805d7268e826e889189f0/XII_5.v\#L221}{^\coderef}\)
\end{lemma}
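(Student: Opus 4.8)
The plan is to prove the two implications separately, since only logical equivalence is claimed. Unwinding \verb~es_ii_family~ in the relevant degree (\(n = 0\)), a datum of \(\verb~es_ii_family~(F,E)\) consists of an abelian group \(B'\), a homomorphism \(f : B' \to B\), a witness of \(\verb~es_eqrel~(\pt, f^*(F))\)---which, since \verb~es_zig~ is the identity type in this degree, amounts by \cref{lem:abses-path-data} to path data exhibiting the pullback \(f^*(F)\) as the trivial short exact sequence---and a point of \(\fib{f_*}(E)\), i.e.\ some \(K : \vAbSES(C, B')\) together with a path \(f_*(K) = E\). The target \(\fib{q_*}(E)\) consists of a \(G : \vAbSES(C, F)\) with \(q_*(G) = E\), where \(q : F \to B\) is the projection of the given sequence \(A \xra{j} F \xra{q} B\).

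For the forward direction I would use the tautological index \(\alpha := (F, q)\). The point is that \(q^*(F)\) is trivial: applying \cref{prop:pullback-characterisation} to the diagram
\[ \begin{tikzcd}
    A \dar[equals] \rar["\mathrm{in}_A"] & A \oplus F \rar["\mathrm{pr}_F"] \dar["\mu"] & F \dar["q"] \\
    A \rar["j"] & F \rar["q"] & B \period
  \end{tikzcd} \]
in which \(\mu(a,x) := j(a) + x\) and the squares commute because \(q \circ j = 0\), identifies the trivial sequence with \(q^*(F)\) and so produces the required witness of \(\verb~es_eqrel~(\pt, q^*(F))\). Since \(f = q\) here, \(\fib{f_*}(E)\) is literally \(\fib{q_*}(E)\), so a given point of the latter furnishes the remaining component; this defines the forward map.

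For the backward direction, given \((B', f, z, (K, \sigma))\) with \(\sigma : f_*(K) = E\), the witness \(z\) yields (again via \cref{lem:abses-path-data}) path data \(\phi : A \oplus B' \to f^*(F)\) compatible with the inclusions and projections. Composing the biproduct inclusion \(B' \to A \oplus B'\) with \(\phi\) and then with the pullback leg \(f^*(F) \to F\), I obtain a homomorphism \(g : B' \to F\). The commuting square of the pullback (\(q\) composed with \(f^*(F) \to F\) equals \(f\) composed with the projection of \(f^*(F)\)) together with the projection-triangle of \(\phi\) (its composite with the projection of \(f^*(F)\) is \(\mathrm{pr}_{B'}\)) force \(q \circ g = f\). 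Setting \(G := g_*(K)\), functoriality of pushout under composition turns \(q \circ g = f\) into a path \(q_*(G) = f_*(K)\), which I then compose with \(\sigma\) to reach \(E\); hence \((G, -) : \fib{q_*}(E)\), completing the construction.

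The mathematical content is slim---the heart is just that a trivialisation of \(f^*(F)\) is the same datum as a factorisation of \(f\) through \(q\)---so I expect the main obstacle to be the coherence bookkeeping in the backward direction: assembling the chain \(q_* g_*(K) = f_*(K) = E\) at the level of actual paths, which requires the lemmas that pushout respects composition and homotopies of maps to glue together precisely. As elsewhere in this development, phrasing the intermediate steps in terms of path data rather than raw paths should keep this manageable.
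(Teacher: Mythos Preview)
Your proof is correct and matches the paper's approach: the paper explicitly says it ``give[s] a direct construction'' rather than following Mac Lane's appeal to the six-term exact sequence, and what you describe is precisely that direct construction. The forward map via the tautological index \((F,q)\) together with the triviality of \(q^*(F)\), and the backward map via the lift \(g : B' \to F\) extracted from the splitting of \(f^*(F)\), are the natural choices and agree with the formalisation referenced in the paper.
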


Mac Lane appeals to the six-term exact sequence to prove this lemma, but we give a direct construction.
In order to show Lemma~XII.5.3, we prove a higher analogue of~\cref{lem:5-3-prelim}.
This analogue is phrased in terms of the ``relation fibre'' \verb~rfiber~, which takes the fibre of a point with respect to a relation.

\begin{lemma}
  Let \(n>0\) and consider \(Y : \vES^n(B',A)\), \(F : \vES^n(B,A)\), and two short exact sequences \(B' \xra{k} X \to C \) and \( B \xra{i} E \to C \).
  Given \(\verb~es_zig~(Y \splice X, F \splice E) \), we have maps
  \( \verb~rfiber~_{i^*}(F) \to \verb~rfiber~_{k^*}(Y) \href{https://github.com/jarlg/Yoneda-Ext/blob/0d8bfe8e168bbdf325e805d7268e826e889189f0/XII_5.v\#L285}{^\coderef}\) and \( \verb~es_ii_family~(Y,X) \to \verb~es_ii_family~(F,E) \href{https://github.com/jarlg/Yoneda-Ext/blob/0d8bfe8e168bbdf325e805d7268e826e889189f0/XII_5.v\#L266}{^\coderef}\).
\end{lemma}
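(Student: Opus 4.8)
The plan is to follow the proof of \cref{lem:5-3-prelim} almost verbatim, the only difference being that \(Y\) and \(F\) now have length \(n\), so that the path \(f^*(F) = Y\) appearing there is replaced by a \emph{zig}. Concretely, since \(n > 0\) the sequences \(Y \splice X\) and \(F \splice E\) have length \(\geq 2\), so unwinding the top level of \verb~es_zig~ shows that a term of \(\verb~es_zig~(Y \splice X, F \splice E)\) consists of a homomorphism \(f : B' \to B\), a path \(\pi : f_*(X) = E\) of short exact sequences, and a zig \(\zeta : \verb~es_zig~(Y, f^*(F))\); in particular \(\zeta\) includes into \verb~es_eqrel~, witnessing \(Y \sim f^*(F)\). (When \(n = 1\) the inner zig is an honest path and we recover exactly the situation of \cref{lem:5-3-prelim}.)

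For the map \(\verb~rfiber~_{i^*}(F) \to \verb~rfiber~_{k^*}(Y)\) I would first feed the path \(\pi\) into the universal property of the pushout defining \(f_*(X)\), together with \cref{lem:abses-path-data} (equivalently, the dual of \cref{prop:pullback-characterisation}), to extract a homomorphism \(\psi : X \to E\) on middle groups satisfying \(\psi \circ k = i \circ f\) and compatible with the projections to \(C\). An element of \(\verb~rfiber~_{i^*}(F)\) is some \(W : \vES^n(E,A)\) together with a witness \(i^*(W) \sim F\), and I send it to \(\psi^*(W) : \vES^n(X,A)\). From \(\psi \circ k = i \circ f\) and functoriality of pullback of \(\vES^n\) we get \(k^*(\psi^*(W)) = f^*(i^*(W))\), which is \(\sim f^*(F)\) since pullback respects \verb~es_eqrel~, which is \(\sim Y\) by \(\zeta\); composing these gives the required witness that \(\psi^*(W) \in \verb~rfiber~_{k^*}(Y)\).

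For the map \(\verb~es_ii_family~(Y,X) \to \verb~es_ii_family~(F,E)\) I would not change the underlying group or extension at all, only post-compose the base map with \(f\). Given data \((B'', g : B'' \to B')\), a witness \(\pt \sim g^*(Y)\), and \(X' : \vAbSES(C,B'')\) with \(g_*(X') = X\), I output \((B'', f \circ g)\) with the \emph{same} \(X'\): the condition \((f \circ g)_*(X') = E\) follows from \(g_*(X') = X\) and \(\pi\) using that pushout respects composition, and the condition \(\pt \sim (f\circ g)^*(F) = g^*(f^*(F))\) follows by applying \(g^*\) to \(\zeta\) to get \(g^*(Y) \sim g^*(f^*(F))\) and composing with \(\pt \sim g^*(Y)\).

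Both constructions are routine; the real work, as usual in \cref{sec:les}, is coherence bookkeeping. The main obstacle is arranging the functoriality lemmas for pullback and pushout of \(\vES^n\)---that they respect composition of homomorphisms and respect \verb~es_eqrel~---so that the equalities and relatedness above hold on the nose rather than up to further transport, and extracting \(\psi\) from \(\pi\) without fighting transports. Keeping everything at the level of \(\vES^n\) rather than \(\vExt^n\) is what makes this tractable. These two maps then feed an induction on the length of a zig-zag, exactly as in the proof of \cref{lem:5-3}, to establish the higher analogue of Mac Lane's Lemma~XII.5.3.
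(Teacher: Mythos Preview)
Your proposal is correct and matches the paper's intent: the paper does not spell out a proof of this lemma in prose, only stating that it is ``a higher analogue of \cref{lem:5-3-prelim}'' and pointing to the formalisation. Your unwinding of \verb~es_zig~ at the top level to obtain \((f,\pi,\zeta)\), the use of the map \(\psi : X \to E\) extracted from \(\pi\) to pull back along for the \verb~rfiber~ map, and the post-composition with \(f\) (keeping the same \(X'\)) for the \verb~es_ii_family~ map are precisely the higher-\(n\) adaptations one expects, replacing the equality \(f^*(F)=Y\) of the length-\(1\) case by the relation \(\zeta\) and using that pullback respects \verb~es_eqrel~.
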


\begin{lemma}[{\cite[Lemma~XII.5.5]{Mac63}}]
  Let \(n > 0\), \(F : \vES^n(B,A)\), and \(E : \vES^1(C,B)\).
  The following types are equivalent:%
  \(\href{https://github.com/jarlg/Yoneda-Ext/blob/0d8bfe8e168bbdf325e805d7268e826e889189f0/XII_5.v\#L375}{^\coderef}\)
  \begin{enumerate}
  \item \(\fib{i^*}(E)\);
  \item \(\verb~es_ii_family~(F,E)\);
  \item \(\verb~es_eqrel~(\pt, F \splice E)\).
  \end{enumerate}
\end{lemma}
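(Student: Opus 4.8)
The plan is to follow the proof of~\cref{lem:5-3} at an arbitrary length \(n\), using the preceding lemma (the higher analogue of~\cref{lem:5-3-prelim}) wherever that proof invoked~\cref{lem:5-3-prelim}. Write \(E\) for the short exact sequence \(B \xra{i} M \xra{p} C\), so that condition~(1) is the fibre of \(i^* : \vES^n(M, A) \to \vES^n(B, A)\) over \(F\). First I would record that conditions~(1) and~(2) are logically equivalent; as in~\cite{Mac63} this involves only the outermost short exact sequence \(E\) and the leftmost slot of \(F\), so the same construction goes through: from a preimage \(F' : \vES^n(M,A)\) of \(F\) one reads off, using the pushout square that exhibits \(M\), a group \(B'\), a map \(\alpha : B' \to B\) with \(\alpha^*(F) \sim \pt\), and an extension \(E_0 : \vAbSES(C, B')\) with \(\alpha_*(E_0) = E\) --- which witnesses \verb~es_ii_family~\((F,E)\) --- and conversely such data recovers a preimage.

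The implication~(2)~\(\Rightarrow\)~(3) is a short computation. Given \(B'\), \(\alpha : B' \to B\) and \(E_0\) as above, the splice identity \(f^*(F) \splice E = F \splice f_*(E)\) gives \(F \splice E = F \splice \alpha_*(E_0) = \alpha^*(F) \splice E_0\). Since splicing respects the relation on each slot, \(\alpha^*(F) \sim \pt\) yields \(F \splice E \sim \pt \splice E_0\) in \(\vES^{n+1}(C, A)\), and \(\pt \splice E_0 \sim \pt\) by a single \verb~es_zig~ along the zero map \(B' \to 0\), whose pushforward trivialises \(E_0\) and whose pullback fixes \(\pt\). Composing these relations proves~(3).

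For~(3)~\(\Rightarrow\)~(2) I would induct on the length of the zig-zag from \(\pt\) to \(F \splice E\). In the base case it is an honest path \(\pt = F \splice E\); unwinding this path in the \(\Sigma\)-type \(\vES^{n+1}(C,A)\) forces \(B\) to be trivial and \(F, E\) to be the base points, so~(2) holds with \(\alpha \jeq \id_B\). For the inductive step, split off the step adjacent to \(F \splice E\) and write the preceding vertex as \(Y \splice X\) with \(Y : \vES^n(B'', A)\) and \(X : \vAbSES(C, B'')\); by the inductive hypothesis, together with~(1)~\(\Leftrightarrow\)~(2) for the pair \((Y,X)\), we obtain \verb~es_ii_family~\((Y,X)\) and also condition~(1) for \((Y,X)\). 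If the final step is an \verb~es_zig~ from \(Y \splice X\) to \(F \splice E\), the preceding lemma maps \verb~es_ii_family~\((Y,X)\) to \verb~es_ii_family~\((F,E)\), which is~(2). If it is an \verb~es_zig~ from \(F \splice E\) to \(Y \splice X\), apply the preceding lemma with the two pairs interchanged: it then maps condition~(1) for \((Y,X)\) to \(\fib{i^*}(F)\), i.e.\ condition~(1) for \((F,E)\), and~(1)~\(\Leftrightarrow\)~(2) returns us to~(2).

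The main obstacle is not any single step but the coherence bookkeeping accumulated over the induction: keeping the several short exact sequences, their structure maps, and the zig-data aligned, and especially the zag case, where one must detour through condition~(1) because in that direction the higher analogue of~\cref{lem:5-3-prelim} produces a map on the fibre side rather than the \verb~es_ii_family~ side. Finally, to obtain an equivalence of types rather than a bare logical equivalence, one verifies that the maps constructed above are mutually inverse; as in~\cite{Mac63} these correspondences are given by explicit formulas, so this reduces to unwinding definitions, aided by the concrete constructions used throughout.
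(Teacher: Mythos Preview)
Your overall shape for (2)$\Rightarrow$(3) and the zig-zag induction for (3)$\Rightarrow$(2) matches the paper, but the claim that (1)$\Leftrightarrow$(2) ``goes through by the same construction'' for all $n$ is the gap. At level $n>1$, condition~(2) asks for $\alpha : B' \to B$ with $\mathtt{es\_eqrel}(\pt, \alpha^*F)$ at level~$n$, which is no longer a mere path; likewise condition~(1) is the \emph{relation} fibre $\mathtt{rfiber}_{i^*}(F)$, so a ``preimage'' $F'$ only satisfies $i^*F' \sim F$, not $i^*F' = F$. Your sketch (``the pushout square that exhibits $M$'') does not supply a map $\alpha$ together with a witness that $\alpha^*F$ is \emph{related} to $\pt$: that witness is exactly what one cannot produce without already knowing the lemma one level down. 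Concretely, if you try the obvious $B' := $ the middle of the right-most factor $S$ of $F$ and $\alpha := q_S$, you get $\alpha^*F = (F'',\pt)$, and showing this is related to $\pt$ forces $F'' \sim \pt$, which is false in general.

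The paper handles this by an \emph{outer} induction on $n$: it first isolates an auxiliary lemma stating that if (1), (2), (3) are all equivalent at level $n$, then (1)$\Leftrightarrow$(2) holds at level $n{+}1$; the base case is \cref{lem:5-3}. Only after that does the zig-zag induction for (3)$\Rightarrow$(1)/(2) run, exactly as you describe. So your argument needs to be restructured as a double induction, with the (1)$\Leftrightarrow$(2) step consuming the full inductive hypothesis. Two smaller points: the statement is about \emph{logical} equivalence (maps in both directions), not equivalence of types, so your final paragraph about mutual inverses is not needed and likely not true; and throughout you should read condition~(1) as $\mathtt{rfiber}_{i^*}(F)$ rather than a strict fibre.
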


\begin{proof}
  We first prove an auxiliary lemma which shows that if the three statements are equivalent for a given \(n\), then (1) and (2) are equivalent for \(n+1\).
  The base case for this lemma is simply \cref{lem:5-3}.
  For the inductive step, our auxiliary lemma gives us that (1) and (2) are equivalent.
  It is easy to show that (2) always implies (3), so it remains to show that (3) implies either (1) or (2).
  For this we induct on the length of a zig-zag, and use the equivalence of (1) and (2) along with the previous lemma, similarly (at least in structure) to the proof of \cref{lem:5-3}.
\end{proof}

Afterwards, we reformulate this lemma in terms of \(\vExt^n\).%
\(\href{https://github.com/jarlg/Yoneda-Ext/blob/0d8bfe8e168bbdf325e805d7268e826e889189f0/XII_5.v\#L505}{^\coderef}\)
With this lemma at hand, and using similar methods to the ones presented here, we follow the proof of \cite[Lemma~5.2]{Mac63} to deduce exactness of the long sequence of \cref{thm:les}.

\section{Conclusion}

We have presented a formalisation of the theory of Yoneda Ext in the novel setting of homotopy type theory, starting from the basic definition of a short exact sequence and arriving at the (contravariant) long exact sequence, with various related results along the way.
At present, the long exact sequence is one of pointed sets, and we leave it to future work to formalise the Baer sum on \(\vExt^n\) for \(n > 1\), which would promote this into a long exact sequence of abelian groups.
(The notion of exact sequence coincides for abelian groups and pointed sets.)

For pragmatic reasons we have worked with abelian groups, though it is clear that everything we have done could be applied to general modules.
Even so, the higher Ext groups of abelian groups do not necessarily vanish in HoTT~\cite{CF}, so these are already interesting.
There are various more general approaches that we would like to consider in the future, such as working with \emph{pure} exact sequences (in which the classes of monomorphisms and epimorphisms are appropriately replaced) in an abelian category.

Many of our results have been contributed to the Coq-HoTT library~\cite{coqhott} under the namespace \verb~Algebra.AbSES~, which currently weighs in at about 2900 lines of code (whitespace and comments included).
This excludes the various contributions made to other parts of the library; the precise contributions may be seen through the pull requests
\href{https://github.com/HoTT/Coq-HoTT/pull/1534}{\verb~\#1534~},
\href{https://github.com/HoTT/Coq-HoTT/pull/1646}{\verb~\#1646~},
\href{https://github.com/HoTT/Coq-HoTT/pull/1663}{\verb~\#1663~},
\href{https://github.com/HoTT/Coq-HoTT/pull/1712}{\verb~\#1712~},
\href{https://github.com/HoTT/Coq-HoTT/pull/1718}{\verb~\#1718~},
and \href{https://github.com/HoTT/Coq-HoTT/pull/1738}{\verb~\#1738~}.
In addition, the code for the long exact sequence currently weighs in at about 1350 lines in the separate \href{https://github.com/jarlg/Yoneda-Ext}{\verb~Yoneda-Ext~} repository.

The formalisation covers a substantial part of chapters III.1-3, III.5, and XII.5 of \cite{Mac63}, but also extends beyond the classical theory.
In particular, our proof of \cref{thm:pullback-fibre-sequence} is new even for classical Yoneda Ext (though the theorem is known).
This theorem presented the most challenging part of this formalisation, as it required managing considerable amounts of coherence.
The other challenging part was the long exact sequence, whose proof involves an intricate induction and numerous constructions.
By formalising these theorems we have not only established their correctness but also contributed evidence of the feasibility of dealing with sophisticated mathematical structures in a proof assistant like Coq.

\bibliography{yext}

\begin{thebibliography}{10}

\bibitem{BCFR}
Ulrik Buchholtz, J.~Daniel Christensen, Jarl G.~Taxerås Flaten, and Egbert
  Rijke.
\newblock Central {H}-spaces and banded types, 2023.
\newblock \href {https://doi.org/10.48550/ARXIV.2301.02636}
  {\path{doi:10.48550/ARXIV.2301.02636}}.

\bibitem{CF}
J.~Daniel Christensen and Jarl G.~Taxerås Flaten.
\newblock Ext groups in homotopy type theory, 2023.
\newblock \href {http://arxiv.org/abs/2305.09639} {\path{arXiv:2305.09639}}.

\bibitem{coqhott}
Coq-HoTT.
\newblock The {Coq-HoTT} library.
\newblock URL: \url{https://github.com/HoTT/Coq-HoTT}.

\bibitem{GvBI}
Ren\'e Guitart and Luc Van~den Bril.
\newblock Calcul des satellites et pr\'esentations des bimodules \`a l'aide des
  carr\'es exacts.
\newblock {\em Cahiers de Topologie et G\'eom\'etrie Diff\'erentielle
  Cat\'egoriques}, 24(3):299--330, 1983.
\newblock URL: \url{http://archive.numdam.org/item/CTGDC_1983__24_3_299_0/}.

\bibitem{GvBII}
Ren\'e Guitart and Luc Van~den Bril.
\newblock Calcul des satellites et pr\'esentations des bimodules \`a l'aide des
  carr\'es exacts (2e partie).
\newblock {\em Cahiers de Topologie et G\'eom\'etrie Diff\'erentielle
  Cat\'egoriques}, 24(4):333--369, 1983.
\newblock URL: \url{http://archive.numdam.org/item/CTGDC_1983__24_4_333_0/}.

\bibitem{Mac63}
Saunders Mac~Lane.
\newblock {\em Homology}.
\newblock Springer, 1963.

\bibitem{Mit65}
Barry Mitchell.
\newblock {\em Theory of categories}.
\newblock Academic Press, 1965.

\bibitem{Retakh1986}
Vladimir~S. Retakh.
\newblock Homotopic properties of categories of extensions.
\newblock {\em Russian Mathematical Surveys}, 41(6):217--218, 12 1986.
\newblock \href {https://doi.org/10.1070/rm1986v041n06abeh004237}
  {\path{doi:10.1070/rm1986v041n06abeh004237}}.

\bibitem{rijke-intro}
Egbert Rijke.
\newblock {\em Introduction to Homotopy Type Theory}.
\newblock Cambridge University Press, 2023.
\newblock To appear.

\bibitem{hottbook}
The {Univalent Foundations Program}.
\newblock {\em Homotopy Type Theory: Univalent Foundations of Mathematics}.
\newblock \url{https://homotopytypetheory.org/book}, Institute for Advanced
  Study, 2013.

\bibitem{vanDoorn2018}
Floris van Doorn.
\newblock {\em On the Formalization of Higher Inductive Types and Synthetic
  Homotopy Theory}.
\newblock PhD thesis, Carnegie Mellon University, 2018.
\newblock \href {https://doi.org/10.48550/arXiv.1808.10690}
  {\path{doi:10.48550/arXiv.1808.10690}}.

\bibitem{Yoneda1954}
Nobuo Yoneda.
\newblock On the homology theory of modules.
\newblock {\em Journal of the Faculty of Science, the University of Tokyo
  Section I}, 7:193--227, 1954.

\bibitem{Yoneda1960}
Nobuo Yoneda.
\newblock On {E}xt and exact sequences.
\newblock {\em Journal of the Faculty of Science, the University of Tokyo
  Section I}, 8:507--576, 1960.

\end{thebibliography}

\end{document}